\newcolumntype{P}[1]{>{\centering\arraybackslash}p{#1}}
\newcolumntype{M}[1]{>{\centering\arraybackslash}m{#1}}
\newcolumntype{R}[1]{>{\arraybackslash}m{#1}}
\definecolor{orange}{rgb}{1,0.5,0}
\definecolor{graynode}{RGB}{20,20,20}
\definecolor{crimsonred}{RGB}{220,20,60}
\definecolor{darkgraynode}{gray}{0.5}
\definecolor{lightgraynode}{gray}{0.8}
\definecolor{gray}{RGB}{20,20,20}
\definecolor{gray}{RGB}{0.7,0.7,0.7}
\definecolor{greencm}{RGB}{0,153,0}
\definecolor{plotblue}{RGB}	{30,144,255}
\definecolor{plotgreen}{RGB}	{50,205,50}
\definecolor{plotred}{RGB}	{220,20,60}
\definecolor{myyellow}{RGB}{255,255,204}
\definecolor{myred}{RGB}{255,204,204}
\definecolor{myblue}{RGB}{0,200,255}
\definecolor{mygreen}{RGB}{80,220,80}
\newcommand*\hrulefillvar[1][0.4pt]{\leavevmode\leaders\hrule height#1\hfill\kern0pt}
\definecolor{thedarkblue}{RGB}{0,0,120} 
\definecolor{mydarkblue}{rgb}{0,0.08,0.45} 
\begin{document}

\title{Utility Mining Across Multi-Sequences with Individualized Thresholds}

\author{Wensheng Gan}
\affiliation{%
	\institution{Harbin Institute of Technology (Shenzhen)}
	\city{Shenzhen}
	\country{China}
}
\email{wsgan001@gmail.com}

\author{Jerry Chun-Wei Lin}
\authornote{This is the corresponding author}
\affiliation{%
	\institution{Western Norway University of Applied Sciences (HVL)}
	\city{Bergen}
	\country{Norway}
}
\email{jerrylin@ieee.org}

\author{Jiexiong Zhang}
\affiliation{%
	\institution{Harbin Institute of Technology (Shenzhen)}
	\city{Shenzhen}
	\country{China}	
}
\email{jiexiong.zhang@foxmail.com}

\author{Philip S. Yu}
\affiliation{%
	\institution{University of Illinois at Chicago}
	\city{Chicago}
	\country{USA}
}
\email{psyu@uic.edu}

\renewcommand\shortauthors{W. Gan et al.}

\begin{abstract}

Utility-oriented pattern mining has become an emerging topic since it can reveal high-utility patterns (e.g., itemsets, rules, sequences) from different types of data, which provides more information than the traditional frequent/confident-based pattern mining models. The utilities of various items are not exactly equal in realistic situations; each item has its own utility or importance. In general, user considers a uniform minimum utility (\textit{minutil}) threshold to identify the set of high-utility sequential patterns (HUSPs). This is unable to find the interesting patterns while the \textit{minutil} is set extremely high or low. We first design a new utility mining framework namely USPT for mining  high-\textbf{\underline{U}}tility \textbf{\underline{S}}equential \textbf{\underline{P}}atterns across multi-sequences with individualized \textbf{\underline{T}}hresholds. Each item in the designed framework has its own specified minimum utility threshold. Based on the lexicographic-sequential tree and the utility-array structure, the USPT framework is presented to efficiently discover the HUSPs. With the upper-bounds on utility, several pruning strategies are developed to  prune the unpromising candidates early in the search space. Several experiments are conducted on both real-life and synthetic datasets to show the performance of the designed USPT algorithm, and the results showed that USPT could achieve good effectiveness and efficiency for mining HUSPs with individualized minimum utility thresholds. 

\end{abstract}

%
%
\begin{CCSXML}
<ccs2012>
 <concept>
<concept_id>10002951.10003227.10003351</concept_id>
<concept_desc>Information systems~Data mining</concept_desc>
<concept_significance>500</concept_significance>
</concept>

</ccs2012>
\end{CCSXML}

\ccsdesc[500]{Information Systems~Data mining}

\ccsdesc[500]{Applied Computing~Business intelligence} 

\keywords{Economic behavior, utility theory, sequence, high-utility patterns, individualized threshold.}

\maketitle

\section{Introduction}

Sequential pattern mining (SPM) \cite{agrawal1995mining,srikant1996mining,pei2001prefixspan} has become an interesting and emerging issue in knowledge discovery in databases (KDD) \cite{agrawal1993database,chen1996data}, and it has been used in many real-life applications, such as behavioral analysis of customers, DNA sequence analysis, and natural disaster analysis \cite{fournier2017survey}. The purpose of SPM is to discover the frequent sequences from the sequence database using a uniform users' defined minimum threshold. Both SPM and frequent itemset mining (FIM) \cite{agrawal1994fast,han2004mining} are frequent pattern mining approaches, where the main difference between them is that the processed data in SPM is consequentially time-ordered. Thus, it is more complicated and complex to discover the interesting information from the sequential databases in SPM. 

The frequency-based pattern mining frameworks (e.g., FIM and SPM) mainly consider the co-occurrence frequencies of the itemset/sequence. The above well-known frameworks cannot, however, retrieve the high-utility patterns or handle more important factors for pattern mining, such as interestingness, utility, importance, and risk. To solve this limitation, in recent decades, the utility-oriented pattern mining framework \cite{2gan2018survey} named high-utility itemset mining (HUIM) has been extensively studied and successfully applied in numerous domains \cite{chan2003mining,tseng2013efficient,liu2005two}. HUIM considers the quantity and unit utility of items to measure the high-utility itemsets (HUIs) from a quantitative dataset. Traditional HUIM does not hold the downward closure property \cite{agrawal1994fast}, and it becomes a non-trivial task to mine the useful and meaningful HUIs from the databases.

Although HUIM is useful to discover high-utility patterns in many real-life applications, it cannot handle a sequence database that contains the embedded timestamp information of the events/items. In particular, the ordering of elements or events is important in many real-life domains. For instance, HUIM can be applied to analyze the shopping behaviors of the customers for targeted marketing and also be utilized for the recommendation task since the order of purchase products should be considered to retrieve more interesting patterns. Besides, the events/sequences in sequential data commonly contain multiple time-order sequences, but not a single long sequence. For example, the real dataset in market basket analysis is usually has rich information, including user ID, timestamp, event ID, record (event), quantity, and unit price. In general, each record in these datasets contains multi-sequences. Episode mining \cite{mannila1997discovery} focuses on discovering interesting patterns from a single sequence, while SPM \cite{agrawal1995mining,srikant1996mining,pei2001prefixspan,fournier2017survey} usually discovers frequent sequences from multi-sequences.  To solve the above limitation, high-utility sequential pattern mining (HUSPM) \cite{lan2014applying,wang2016efficiently,yin2012uspan,yin2013efficiently} was introduced, and it mainly aims at discovering more informative sequential patterns. It is more complicated than HUIM and SPM since both the ordering and the utility of events or elements are considered together. A sequence is considered to be a high-utility sequential pattern (HUSP) if its overall utility is no less than a user-specified minimum utility threshold. To retrieve HUSPs, several techniques and approaches have been developed in the past \cite{alkan2015crom,wang2016efficiently,yin2012uspan}. However, most of these techniques focus on improving the mining efficiency, but not its effectiveness. In addition, all the existing HUSPM approaches suffer from an important limitation since most of them focus on mining the HUSPs with a uniform and single minimum utility threshold. Thus, all the items and sequences are considered as being the same weight or importance, which is unfair and not realistic in some real-life situations.  

In real-life situations, each object/item has its own importance and utility because the utility (e.g., risk, profit) for various objects/items are not exactly equal. Some patterns with higher importance may have low utilities in databases. For example, a shopping mall contains more than hundred or thousand products, and it is impossible to measure them with the same importance due to their characteristics. For instance, diamond and clothe are two different products and should not be treated as the same importance. It is more reasonable to respectively set different utility thresholds for diamond and clothe since they have different features. Although there are numerous studies of market basket analysis focusing on SPM from the sequential purchase data, none of them consider the quantity, utility factor, or individuality of importance at the same time.  Another example can be referred to the recommendation system, which utilizes time-stamp information to improve the performance. A lower threshold could be set for a target topic or product that users are interested to reveal more information, or to sell multiple products for mixed bundling.

Therefore, mining HUSPs with an individualized threshold of each item is more realistic than existing HUSPM algorithms for retrieving more useful and meaningful patterns. However, it is not a trivial task to address these challenges. Therefore, we are motivated to first develop a novel utility-oriented mining framework, called mining high-\textbf{\underline{U}}tility \textbf{\underline{S}}equential \textbf{\underline{P}}atterns with individualized \textbf{\underline{T}}hresholds (abbreviated as USPT). The major contributions of this paper are described below:

\begin{enumerate}
	\item To the best of our knowledge, this is first work to formulate a new USPT framework for mining high-utility sequential patterns across multi-sequences with individualized thresholds.  The proposed USPT method sets an individualized threshold for each item, instead of a uniform minimum utility threshold for all items. This progress can avoid the ``rare item'' problem \cite{liu1999mining}, and more flexible for real-life situations.
	\item A compressed structure, called utility-array, is utilized to keep the necessary information in a lexicographic-sequential tree. Based on the properties of utility in the projection-based utility-array, several pruning strategies are further designed to improve the mining performance.
	\item  Experiments indicated that the USPT algorithm and its variations achieved effectiveness and efficiency for mining the set of HUSPs with individualized thresholds from sequential databases. More specifically, the proposed  strategies can effectively prune the unpromising candidates and speed up mining process.
\end{enumerate}

Note that some key concepts were presented in a preliminary version \cite{lin2017high} of this paper. The rest of this paper is organized as follows. A literature review is given in Section \ref{sec:2}. Preliminaries and the problem statement of the novel USPT are stated in Section \ref{sec:3}. The developed USPT method with utility-array and pruning strategies are discussed in Section \ref{sec:4}. An illustrated example, which can illustrate the proposed algorithm step-by-step, is provided in Section \ref{sec:5}. Several experiments of the designed algorithm are described in Section \ref{sec:6}. Conclusions and directions for future works are given in Section \ref{sec:7}.

\section{Related Work}
\label{sec:2}
We structure the literature review into three main elements, consisting of high-utility itemset mining, high-utility sequential pattern mining, and  pattern mining across multi-sequences or with multiple/individualized thresholds.

\subsection{High-Utility Itemset Mining}

In 2003, Chan \textit{et al.} \cite{chan2003mining} considered the utility theory \cite{marshall2005principles} in a pattern mining task, and introduced the problem of high-utility itemset mining (HUIM). Yao \textit{et al.} \cite{yao2004foundational} then introduced an approach for mining the set of high-utility itemsets (HUIs). Since traditional HUIM does not hold the downward closure property \cite{agrawal1994fast}, Liu \textit{et al.} \cite{liu2005two} developed a transaction-weighted downward closure property and presented the concept of high transaction-weighted utilization itemsets (HTWUIs). To obtain better performance for mining HUIs, several tree-based algorithms have been designed, such as IHUP \cite{ahmed2009efficient}, HUP-growth \cite{lin2011effective}, UP-Growth \cite{tseng2010up}, and UP-Growth$^{+}$ \cite{tseng2013efficient}. To reduce a huge number of candidates, Liu \textit{et al.} \cite{liu2012mining} developed the HUI-Miner algorithm to efficiently discover the set of HUIs by using a novel utility-list structure. This process can directly discover HUIs. Thus, the procedures of candidate generation and multiple database scans can be avoided. Variations of HUIM have been extensively designed and studied to more efficiently discover HUIs, such as FHM \cite{fournier2014fhm} and EFIM \cite{zida2015efim}.

In addition, a number of HUIM algorithms have been extensively studied to improve the mining effectiveness. For example, Tseng \textit{et al.} studied the problem of concise and lossless representation of high-utility itemsets \cite{tseng2015efficient}, as well as the top-\textit{k} issue of HUIM \cite{tseng2016efficient}. Lin \textit{et al.} introduced several models to discover various types of high-utility patterns from different types of data, such as mining HUIs from uncertain databases \cite{lin2016efficient}, mining up-to-date HUIs \cite{lin2015efficient}, and dynamic utility mining \cite{lin2015fast,1gan2018survey}. Other interesting issues, such as mining on-shelf HUIs \cite{lan2011discovery}, mining high-utility association rules \cite{mai2017lattice}, and mining high-utility occupancy patterns \cite{gan2019huopm}, are also extensively explored. Ryang at al. then addressed the issue that mining HUIs from data stream \cite{ryang2016high}, and Yun \textit{et al.} extended the HUIM problem for establishing manufacturing plan \cite{yun2017efficient}.

\subsection{High-Utility Sequential Pattern Mining}

Sequential pattern mining (SPM) \cite{agrawal1995mining,pei2001prefixspan,srikant1996mining} provides more information to handle order-based applications, such as behavior analysis, DNA sequence analysis, and weblog mining. SPM was first proposed in \cite{agrawal1995mining} and has been extensively studied, resulting in methods such as GSP \cite{srikant1996mining}, PrefixSpan \cite{pei2001prefixspan}, SPADE \cite{zaki2001spade}, and SPAM \cite{ayres2002sequential}. However, SPM is based on a frequency/support framework for mining frequent sequences, which does not focus on business interests. Thus, a novel utility-oriented mining framework, called high-utility sequential pattern mining (HUSPM) \cite{alkan2015crom,wang2016efficiently,yin2012uspan}, has been developed. HUSPM considers ordered sequences and reveals the utilities of multi-sequences. It has many real applications, including high-utility patterns of purchase behavior \cite{gan2019proum}, high-utility web access sequences \cite{ahmed2010mining}, and high-utility mobile sequential patterns \cite{shie2011mining}. Ahmed \textit{et al.} \cite{ahmed2010novel} first developed level-wise (UL) and pattern-growth (US) approaches for HUSPM. However, both of them can only handle simple sequences. Next, an efficient USpan algorithm \cite{yin2012uspan} was proposed, and it employs a lexicographic quantitative sequence (LQS)-tree to ensure that the complete set of HUSPs can be discovered. The utility information of each node in the LQS-tree is stored in the utility-matrix for mining HUSPs without multiple database scans. To obtain an easier parameter setting, a TUS algorithm \cite{yin2013efficiently} was proposed to discover the top-\textit{k} HUSPs. Lan \textit{et al.} \cite{lan2014applying} proposed a HUSPM approach using the projection mechanism. It uses the sequence-utility upper-bound (\textit{SUUB}) to over-estimate the upper-bound value of the sequence, which is based on the maximum utility measure. The introduced maximum utility measure can obtain a reduced set of patterns, which contains the complete but compact information from the original set. To improve the mining performance for HUSPM, Alkan \textit{et al.} \cite{alkan2015crom} proposed the HuspExt algorithm by estimating the Cumulated Rest of Match (\textit{CRoM}) as the upper-bound value to promptly prune unpromising candidates. Wang \textit{et al.} \cite{wang2016efficiently} developed two tight utility upper-bounds called prefix extension utility (\textit{PEU}) and reduced sequence utility (\textit{RSU}), which can be used to speed up the mining process. Recently, two novel dynamic models, named  IncUSP-Miner \cite{wang2018incremental} and IncUSP-Miner+ \cite{wang2018incremental}, were proposed to incrementally discover HUSPs. A projection-based ProUM \cite{gan2019proum} algorithm introduces a new data structure namely utility-array, and has the state-of-the-art performance for mining high-utility sequential patterns. Gan \textit{et al.} \cite{2gan2018survey} presented a comprehensive report of utility-oriented pattern mining and the up-to-date developments.

\subsection{Pattern Mining across Multi-Sequences or with Multiple Thresholds}

In the past, several works focused on pattern mining across multi-sequences \cite{pei2001prefixspan,zaki2001spade,lee2009mining,fournier2017survey}. Different from the early studies that discover interesting patterns from a single sequence (most of them are related to episode mining \cite{mannila1997discovery}), the task of  SPM aims at mining frequent sequential patterns from multi-sequences where each record contains multiple sequences \cite{pei2001prefixspan,zaki2001spade}. In general, each record in a sequential database, also called an event/sequence, contains multiple time-series sequences, but not a single long sequence. Up to now, many algorithms for SPM have been extensively studied \cite{fournier2017survey,3gan2018survey}. In addition, Pinto \textit{et al.} \cite{pinto2001multi} first proposed the theme of multi-dimensional sequential pattern mining, which is different from the general sequential pattern mining. The AprioriMD \cite{yu2005mining} and PrefixMDSpan \cite{yu2005mining} algorithms can find sequential patterns from $d$-dimensional sequence data, where $d$ > 2. Pinto's work about multi-dimensional SPM aims to find frequent subsequences in subsets of the data, while AprioriMD finds subsequences on different granularities of sequences. Lee \textit{et al.} \cite{lee2009mining} developed the CMP-Miner algorithm to discover closed patterns from the multiple time-series database.

The above algorithms only consider a uniform minimum support or utility threshold to mine the required information. However, objects/items do not have the same nature (e.g., frequency, utility, weight, and risk) in realistic applications. It is an unfair measurement to consider them as having the same importance to discover the meaningful information. Liu \textit{et al.} first proposed multiple minimum supports for association rule mining (ARM) to solve the ``\textit{rare item}'' problem \cite{liu1999mining}. The proposed MSApriori algorithm adopts the sorted closure property, which is used to prune the large candidates \cite{liu1999mining}. Liu \textit{et al.} \cite{liu2011discovering} extended the multiple thresholds to the SPM method and proposed the multiple supports-generalized sequential pattern (MS-GSP) algorithm to find the frequent sequential patterns with multiple minimum supports. This approach is performed in a level-wise manner, and thus a large amount of computation is required. Hu \textit{et al.} \cite{hu2013efficient} then considered the multiple minimum support to present a more efficient MSCP-growth for SPM.

Recently, the problem of high-utility itemset mining with individualized utility thresholds has been studied, including the Apriori-like HUI-MMU algorithm \cite{2lin2016efficient}, the one-phase HIMU algorithm \cite{gan2016more}, and MHUI algorithm \cite{krishnamoorthy2018efficient}. HUI-MMU \cite{2lin2016efficient} first introduced the \textit{MIU} (minimum utility threshold of a $k$-itemset) concept and formulates the problem statement of utility mining with multiple utility thresholds, while its mining performance is worse. Both HIMU \cite{gan2016more} and the enhanced MHUI \cite{krishnamoorthy2018efficient} utilize the vertical data structure utility-list \cite{liu2012mining} to store the necessary information of utility, and adopt several powerful pruning strategies to reduce the search space and memory cost. Although the above algorithms involve multiple thresholds (e.g., support or utility)  to pattern mining, there is no work related to  HUSPM with individualized thresholds yet.

\section{Preliminary and Problem Statement}
\label{sec:3}

In this section, some concepts and principles of utility mining on sequence with individualized thresholds are firstly presented.  We then formulate the problem statement of high-utility sequential pattern mining with individualized thresholds as follows.

Let \textit{I} = \{$ i_{1} $, $ i_{2} $, $\dots$, $ i_{m} $\} be a finite set of distinct items. An \textit{itemset} is denoted as $ w $ = [$ i_{1} $, $ i_{2} $, $\dots$, $ i_{c} $], which is a subset of $ I $ without quantities. A \textit{quantitative itemset}, denoted as \textit{v} = [($ i_{1}, q_{1} $) ($ i_{2}, q_{2} $) $\dots$ ($ i_{c}, q_{c} $)], is a subset of \textit{I}. In addition, each item $ i_{c}$ within the quantitative itemset has its own quantity $q_{c} $, called \textit{internal utility}.  Without a loss of generality, we assume that items in an itemset (quantitative itemset) are listed in \textit{alphabetical} order since items are unordered in an itemset (quantitative itemset). A \textit{sequence} is an ordered sequence with at least one or more itemsets but without quantities, which can be denoted as:  $ t $ = $ < $$ w_{1} $, $ w_{2} $, $\dots$, $ w_{d} $$ > $.  A \textit{quantitative sequence} is an ordered sequences with at least one or more quantitative itemsets, which can be denoted as: $ s $ = $ < $$ v_{1} $, $ v_{2} $, $\dots$, $ v_{d} $$ > $. In this paper, we use ``\textit{q}-'' instead of \textit{quantitative}. Thus, a ``\textit{q}-sequence'' represents the set of ordered sequences, and each item within a sequence has its own quantity. The ``sequence'' directly represents the set of ordered sequences without quantities. We also can extend this concept to the ``\textit{q}-itemset'' and ``itemset''. Thus, ``\textit{q}-itemset'' represents a set of items, and each item has its own quantity; and ``itemset'' only represents the set of items without quantities. For instance, $ < $[(\textit{b}, 2) (\textit{c}, 1)], [(\textit{a}, 3)]$ > $ is a \textit{q}-sequence, while $ < $[\textit{bc}], [\textit{a}]$ > $ is a sequence. [(\textit{b}, 2) (\textit{c}, 1)] is a \textit{q}-itemset and [\textit{bc}] is an itemset.

A \textit{sequential database with quantitative values} is a set of sequences and is denoted as: \textit{QSD} = \{$ S_{1} $, $ S_{2} $, $\dots$, $ S_{n} $\}. Each sequence  $ S_{q}\in QSD$ is a \textit{q}-sequence and has as a unique identifier, which is used to present its \textit{SID}. Moreover, each item in \textit{QSD} is associated with an \textit{external utility}, which can be the unit utility of each item and denoted as $ pr(i_{j}) $. In this paper, Table \ref{table:db} is used as an example to illustrate the definitions and processed steps. The \textit{utility-table} is used to present the unit utility of each item shown in Table \ref{table:profit}. Based on the illustrated example, [(\textit{a}:3) (\textit{b}:2)] is the first \textit{q}-itemset in $ S_{3} $. The quantity of (\textit{b}) in this \textit{q}-itemset is set as 2, and its utility is measured as 2 $\times$ \$5 = \$10.

\begin{table}[!htbp]
	\setlength{\abovecaptionskip}{0pt}
	\setlength{\belowcaptionskip}{0pt} 
	\caption{A quantitative sequential database (QSD).}
	\centering
	\begin{tabular}{|c|c|}
		\hline
		\textbf{SID}	 & \textbf{Q-sequence} \\ \hline
		$ S_{1} $ & $ < $[(\textit{d}:2)], [(\textit{a}:2) (\textit{e}:3)], [(\textit{b}:3) (\textit{d}:3)], [(\textit{c}:4) (\textit{e}:5)]$ > $  \\ \hline
		
		$ S_{2} $ & $ < $[(\textit{b}:1) (\textit{d}:3)], [(\textit{b}:5) (\textit{c}:3) (\textit{e}:2)], [(\textit{a}:1) (\textit{c}:2) (\textit{d}:3) (\textit{e}:4)]$ > $  \\ \hline

		$ S_{3} $ & $ < $[(\textit{a}:3) (\textit{b}:2)], [(\textit{a}:2) (\textit{b}:3) (\textit{c}:1)], [(\textit{b}:4) (\textit{c}:5) (\textit{e}:4)], [(\textit{d}:3)]$ > $  \\ \hline
		
		$ S_{4} $ & $ < $[(\textit{a}:3) (\textit{c}:2)], [(\textit{b}:5) (\textit{d}:1)], [(\textit{c}:4) (\textit{d}:5) (\textit{e}:3)]$ > $  \\ \hline
		
		$ S_{5} $ & $ < $[(\textit{a}:3) (\textit{b}:4)], [(\textit{a}:2) (\textit{b}:2) (\textit{c}:5) (\textit{d}:3) (\textit{e}:4)]$ > $  \\ \hline
		
		$ S_{6} $ & $ < $[(\textit{f}:4)], [(\textit{b}:5) (\textit{c}:3)], [(\textit{b}:3) (\textit{e}:4)]$ > $  \\ \hline
	\end{tabular}
	\label{table:db}
\end{table}

\begin{table}[!htbp] 
	\setlength{\abovecaptionskip}{0pt}
	\setlength{\belowcaptionskip}{0pt} 		
	\caption{A utility table.}
	\centering
	\begin{tabular}{|c|c|c|c|c|c|c|}
		\hline
		\textbf{Item} &	\textit{a} & \textit{b} & \textit{c} &	\textit{d} & \textit{e} & \textit{f} \\ \hline
		\textbf{Utility} & \$4 & \$5 & \$3 & \$1 & \$2 & \$6 \\ \hline
	\end{tabular}
	\label{table:profit}	
\end{table}

\begin{definition}C
	\rm Let $ mu(i_{j}) $ represent the minimum utility threshold of $ i_{j} $ in a \textit{QSD}, where $ i_{j} $ is an item. A multiple threshold table is denoted as \textit{M-table}, which consists of the individualized utility threshold of each item and defined as: $ \textit{M-table}$ = $\{mu(i_{1}), mu(i_{2}), \dots, mu(i_{m})\}$. In general, the \textit{M-table} is user-specified based on prior knowledge.
\end{definition}

In the running example, we assume that the \textit{M-table} is defined as: \textit{M-table} = \{$ mu(a) $, $ mu(b) $, $ mu(c) $, $ mu(d) $, $ mu(e) $, $ mu(f) $\} = \{\$500, \$500, \$500, \$500, \$200, \$70\}.

\begin{definition}
	\rm The minimal utility threshold of a sequence $ t $ with respect to items among $t$  is denoted as $ MIU(t) $, which is used to represent the minimal  utility (\textit{mu}) value of all items in \textit{t}. Thus, it is defined as: $MIU(t) = min\{mu(i_{j})|i_{j}\in t\}$. Note that \textit{t} represents a sequence throughout this paper.
	
\end{definition}

Obviously, the \textit{MIU} value of a sequence is set based on the user-specified \textit{M-table}. For example, $ MIU $($<$[\textit{b}]$>$) = \textit{min}\{$ mu(b) $\} = \$500, and $ MIU $($<$[\textit{be}]$>$) = \textit{min}\{$ mu(b$), $ mu(e) $\} = \textit{min}\{\$500, \$200\} = \$200.

\begin{definition}
	\rm Let $ u(i_{j}, v) $ denote the utility of $ i_{j} $ in a \textit{v}, where $ i_{j} $ is an item and \textit{v} is an $q$-itemset. It is defined as: $u(i_{j}, v)$ = $q(i_{j}, v)\times pr(i_{j})$. The $ q(i_{j}, v) $ represents the quantity of ($ i_{j} $) in $ v $, and the unit utility of ($ i_{j} $) represents as $ pr(i_{j}) $. $ u(v) $ is the utility of a \textit{q}-itemset $ v $, which can be defined as $u(v) = \sum_{i_{j}\in v}u(i_{j}, v)$.
\end{definition}

For example, in Table \ref{table:db}, the utility of an item (\textit{a}) in the first \textit{q}-itemset of $ S_{3} $ is calculated as: $u(a$, [(\textit{a}:3) (\textit{b}:2)]) = $ q(a$, [(\textit{a}:3) (\textit{b}:2)]) $\times $ $pr(a)$ = 3 $\times$ \$4 = \$12. And \textit{u}([(\textit{a}:3) (\textit{b}:2)]) = \textit{u}(\textit{a}, [(\textit{a}:3) (\textit{b}:2)]) + \textit{u}(\textit{b}, [(\textit{a}:3) (\textit{b}:2)]) = 3 $\times$ \$4 + 2 $\times$ \$5 = \$22.

\begin{definition}	
	\rm Let $ u(s) $ denote the overall utility of a \textit{q}-sequence such that $ s $ = $<$$v_{1}, v_{2}, \dots, v_{d}$$>$, and it is defined as $u(s)$ = $\sum_{v\in s} u(v)$. Let $u(QSD)$ denote the overall utility of a  quantitative sequential database \textit{QSD}, and it can be defined as: $u(QSD)$ = $\sum_{s\in QSD} u(s)$.	
\end{definition}

For example, $ u(S_{3}) $ = \textit{u}([(\textit{a}:3) (\textit{b}:2)]) + \textit{u}([(\textit{a}:2)] (\textit{b}:3) (\textit{c}:1)) + \textit{u}([(\textit{b}:4) (\textit{c}:5) (\textit{e}:4)]) + \textit{u}([(\textit{d}:3)]) = \$22 + \$26 + \$43 + \$3 = \$94. Thus, the $ u(QSD) $ is to sum up the utilities of $ S_{1} $ to $ S_{6} $ as: \$56 + \$67 + \$94 + \$67 + \$76 + \$81 = \$441.

\begin{definition}
	\rm Let $ v $ and $ v' $ represent two \textit{q}-itemsets, and $ v\subseteq v' $ denote $ v $ is contained in $ v' $, if for every item in $ v $, it has the same item having the same quantity in $ v' $. An itemset (\textit{q}-itemset) containing \textit{k} items is named \textit{k}-itemset (\textit{k}-\textit{q}-itemset). A sequence (\textit{q}-sequence) containing \textit{k} items is named \textit{k}-sequence (\textit{k}-\textit{q}-sequence). 
\end{definition}

For example, in Table \ref{table:db}, an itemset [\textit{ab}] is contained in the itemset [\textit{abc}]. The \textit{q}-itemset [(\textit{a}:3) (\textit{b}:2)] is contained in [(\textit{a}:3) (\textit{b}:2) (\textit{c}:1)] and [(\textit{a}:3) (\textit{b}:2) (\textit{d}:2)], but it is not contained in [((\textit{a}:1) \textit{b}:2) (\textit{c}:3)] and [(\textit{a}:3) (\textit{b}:4) (\textit{e}:4)]. And the sequences $<$[(\textit{b}:2)], [(\textit{d}:3)]$>$ and $<$[(\textit{b}:4)], [(\textit{d}:3)]$>$ are contained in $ S_{3} $, but $<$[(\textit{b}:1)], [\textit{d}:3]$>$ and $<$[(\textit{b}:4)], [(\textit{d}:4)]$>$ are not contained in $ S_{3} $. In addition, $ S_{3} $ is a 9-\textit{q}-sequence and the first \textit{q}-itemset in $ S_{3} $ is a 2-\textit{q}-itemset.

\begin{definition}
	\rm Given a \textit{q}-sequence \textit{s} = $<$$v_{1}, v_{2}, \dots, v_{d}$$>$ and a sequence $ t $ = $<$$w_{1}, w_{2}, \dots, w_{d'}$$>$, if $ d $ = $ d' $ and the items in $ v_{k} $ are equal to the items in $ w_{k} $ such that $ 1\leq k\leq d $, and $ t $ matches $ s $, then it is denoted as: $ t \sim s $.	
\end{definition}

For example, in Table \ref{table:db}, $<$[\textit{ab}], [\textit{abc}], [\textit{bce}]$>$ matches $ S_{3} $. However, we may have multiple matches of a target sequence in a \textit{q}-sequence. For instance, $<$[\textit{b}], [\textit{c}]$>$ has three matches in $ S_{3} $: $<$[\textit{b}:2], [\textit{c}:1]$>$, $<$[\textit{b}:2], [\textit{c}:5]$>$ and $<$[\textit{b}:3], [\textit{c}:5]$>$. Thus, efficiently discovering the required information in the developed USPT framework is a non-trivial task.

\begin{definition}
	\rm Given two sequences, \textit{t} = $<$$w_{1}, w_{2}, \dots, w_{d}$$>$ and $ t' $ = $<$$w'_{1}, w'_{2}, \dots, w'_{d'}$$>$, $ t $ is said to be contained in $ t' $ as $ t \subseteq t $' if it obtains a sequence $ 1\leq k_{1}\leq k_{2}\leq\dots\leq d' $ such that $ w_{j}\subseteq w'_{k_{j}} $ for $ 1\leq j\leq d $. 
	Given two \textit{q}-sequences $ s $ = $<$$v_{1}, v_{2}$, $\dots, v_{d}$$>$ and $ s' $ = $<$$v'_{1}, v'_{2}$, $\dots, v'_{d'}$$>$, let $ s\subseteq s' $ represent that $ s $ is contained in $s'$ if it has an integer sequence $ 1\leq k_{1}\leq k_{2} \leq\dots\leq d' $ such that $ v_{j}\subseteq v'_{k_{j}} $ for $ 1\leq j \leq d $. Thus, $ t \subseteq s $ can be denoted as  $ t \sim s_{k} \wedge s_{k} \subseteq s $ for convenience.
\end{definition}

According to the previous studies, the maximal utility \cite{wang2016efficiently,yin2012uspan} is adopted to measure the utility of a sequence in a sequential database. Note that our proposed USPT model follows this definition of utility, which is defined as follows.

\begin{definition}
	\rm Let $ u(t, s) $ denote the utility of a sequence $ t $ in a \textit{q}-sequence \textit{s}, and it can be defined as: $u(t, s)$ = $max\{u(s_{k})|t \sim s_{k} \wedge s_{k} \subseteq s\}$. The overall utility of \textit{t} sequence in \textit{QSD} is denoted as  $ u(t) $, and defined as: $u(t)$ = $\sum_{s\in QSD} \{u(t,s)|t \subseteq s\}$.
\end{definition}

For example, $ u($$<$$[b], [c]$$>$$, S_{3}) $ = $ max$\{\textit{u}($<$[\textit{b}:2], [\textit{c}:1]$>$), \textit{u}($<$[\textit{b}:2], [\textit{c}:5]$>$), \textit{u}($<$[\textit{b}:3], [\textit{c}:5]$>$)\} = $ max $\{\$13, \$25, \$30\} = \$30. In other words, the overall utility of $<$$[b], [c]$$>$ in sequence $S_3$ is calculated as \$30. As shown in the above results, a sequence may have multiple utility values in a \textit{q}-sequence, which also shows the differences between HUIM and HUSPM. Thus, the utility of $<$[\textit{b}], [\textit{c}]$>$ in \textit{QSD} is:  \textit{u}($<$[\textit{b}], [\textit{c}]$>$) = \textit{u}($<$[\textit{b}], [\textit{c}]$>$, $S_{1}$) + \textit{u}($<$[\textit{b}], [\textit{c}]$>$, $S_{2}$) + \textit{u}($<$[\textit{b}], [\textit{c}]$>$, $S_{3}$) + \textit{u}($<$[\textit{b}], [\textit{c}]$>$, $S_{4}$) + \textit{u}($<$[\textit{b}], [\textit{c}]$>$, $S_{5}$) = \$27 + \$31 + \$30 + \$37 + \$35 = \$160.

\begin{definition}
	\label{Def:HUSP}
	\rm A sequence $ t $ in a \textit{QSD} is denoted as a high-utility sequential pattern (HUSP) if its overall utility in \textit{QSD} is no less than the individualized minimum utility threshold of $t$, such that: $HUSP\gets\{t|u(t)\geq MIU(t)\}$.
\end{definition}

For instance, the sequence $<$[\textit{b}], [\textit{c}]$>$ is not a HUSP since its utility in \textit{QSD} is less than \textit{MIU}($<$[\textit{b}], [\textit{c}]$>$) = \textit{min}\{$ mu(b$), $ mu(c) $\} = \textit{min}\{\$500, \$500\} = \$500. 
There have been many studies on utility mining from sequence data, which  aim at extracting a set of high-utility sequential patterns (HUSPs) with the parameter setting as a uniform minimum utility threshold. However, the addressed problem with HUSPs is not depended on the individualized minimum utility threshold. This motivates us to work on a new mining task with an efficient mining algorithm. Based on the above definitions, the problem statement of mining high-utility sequential patterns across multi-sequences with individualized thresholds can be formulated as a new utility mining problem, as described below.

\textbf{Problem Statement:} For a sequential database with varied quantities named \textit{QSD}, and a multiple unit utility table called \textit{M-table}, the problem of mining high-\textbf{\underline{U}}tility \textbf{\underline{S}}equential \textbf{\underline{P}}atterns across multi-sequences with individualized \textbf{\underline{T}}hresholds (USPT for short) is to discover the set of HUSPs where the utility of the satisfied pattern is no less than its  $ MIU $ value in $QSD$.

It is important to notice that if the minimum individualized thresholds in \textit{M-table} are all equal a uniform minimum utility threshold, the USPT problem can be considered as a traditional HUSPM algorithm. In other words, the traditional HUSPM model is a special case of the USPT problem statement.

\section{The Proposed USPT Framework}
\label{sec:4}
In this section, the baseline and improved variants for the developed USPT framework are designed, and details are given below.

\subsection{Lexicographic-Sequential Tree and Utility-Array}
As the representation of search space in the designed USPT algorithm, a lexicographic-sequential (LS)-tree is built to ensure completeness and correctness for mining HUSPs. The construction procedure can be described as follows:  the database is first scanned to find the potential 1-sequences against their potential minimum utility (\textit{PMIU}) thresholds (details are given in subsection 4.3), and the LS-tree can be constructed by the 1-sequences using the depth-first search strategy. The offspring nodes of a parent node can be combined using  \textit{I}-\textit{Concatenation} or \textit{S}-\textit{Concatenation} (details are given in subsection 4.2). From the structure of Fig. \ref{fig:tree} (constructed from the running example of Table \ref{table:db}), it can be observed that the LS-tree is an enumeration tree, which lists the set of complete sequences, and each node in the LS-tree can be used to represent a sequence. Each node in the LS-tree represents a sequence.

\begin{figure}[!htbp]	
	\centering
	\includegraphics[width=4.5in]{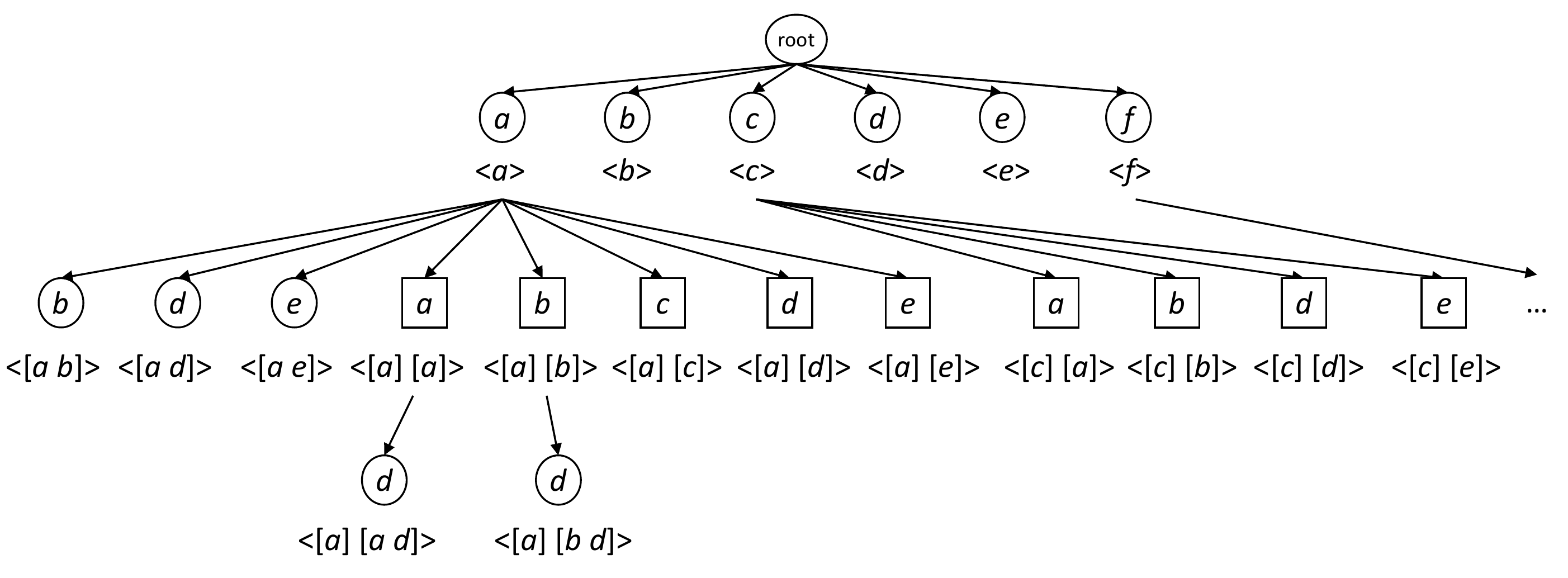}
	\caption{A lexicographic-sequential (LS)-tree \cite{lin2017high}.}
	\label{fig:tree}
\end{figure}

For instance, in Fig. \ref{fig:tree}, $ < $\textit{a}$ > $ to $ < $\textit{f}$ > $ are the 1-sequences and they are the offsprings of the root. The \textit{I}-\textit{Concatenation} sequence is represented as a circle in Fig. \ref{fig:tree} and the \textit{S}-\textit{Concatenation} sequence is represented as a square. The LS-tree also shows the number of candidates in the search space to discover the set of HUSPs.

In the pattern mining task \cite{agrawal1994fast,han2004mining,3gan2018survey}, especially in sequential pattern mining \cite{pei2001prefixspan,fournier2017survey}, there is a huge computational cost if none upper-bound of the patterns is used to prune the search space. To solve this limitation, the utility-array \cite{gan2019proum} structure is utilized to efficiently calculate the utility and position of the potential patterns. It is used to maintain the necessary information of the patterns in the processed dataset.

\begin{definition} (utility-array \cite{gan2019proum})
	\rm In a $q$-sequence database, suppose all the items in a $q$-sequence $s$ have different unique occurred  positions as \{$p_1$, $p_2$, $ \cdots $, $p_k$\}, where \{$p_1$ $< p_2$ $< \cdots < p_k$\}, and the total number of positions ($p_k$) is equal to the length of $s$.  The utility-array of a $q$-sequence $s$ = $\langle e_1, e_2, \cdots, e_n \rangle$ consists of a set of arrays, from left to right in $s$, where $e_n$ is the $n$-th element in $s$. For an item $i_j$ in each position $p_k$ in $s$, it represents an array and contains six fields: \textit{array}$_{p_k}$ = $[$\textit{eid}, \textit{item}, \textit{u}, \textit{ru}, \textit{next\_pos}, \textit{next\_eid}$]$, in which (1) \textit{eid} is the element ID of an element containing $i_j$; (2)  \textit{item} is the  name of item $i_j$; (3)  $u$ is the actual utility of $i_j$ in position $p_k$; (4)  $ru$ is the remaining utility of $i_j$ in position $p_k$; (5)  \textit{next\_pos} is the next position of $i_j$ in $s$; (6)  \textit{next\_eid} is the position of the first item in next element (\textit{eid}+1) after current element (\textit{eid}). Besides, the utility-array records the first occurred  position of each distinct item in a $q$-sequence $s$.

\end{definition}

\begin{table}[!htbp] 
	\setlength{\abovecaptionskip}{0pt}
	\setlength{\belowcaptionskip}{0pt}
	\caption{The utility-array structure of $ S_{3} $ (notation ``-" means empty)}.
	\label{table:utilityarray}
	\centering
	\begin{tabular}{|c|c|c|c|c|c|c|}
		\hline
		&  \textit{\textbf{eid}}  & \textit{\textbf{item}}  &  \textit{\textbf{u}}  &  \textit{\textbf{ru}}  &  \textit{\textbf{next\_pos}}  &  \textit{\textbf{next\_eid}} \\ \hline \hline
		\textit{array$_1$} & 1   & $a$  & \$12  &  \$84 &  3  & 3 \\ \hline
		\textit{array$_2$} & 1   & $b$  & \$10  &  \$72 &  4  & 3 \\ \hline
		\textit{array$_3$} & 2   & $a$  & \$8  &  \$64 &  -  & 6 \\ \hline
		\textit{array$_4$} & 2   & $b$  & \$15  &  \$49 &  6  & 6 \\ \hline
		\textit{array$_5$} & 2   & $c$  & \$3  &  \$46 &  7  & 6 \\ \hline
		\textit{array$_6$} & 3   & $b$  & \$20  &  \$26 &  -  & 9 \\ \hline
		\textit{array$_7$} & 3   & $c$  & \$15  &  \$11 &  -  & 9 \\ \hline
		\textit{array$_8$} & 3   & $e$  & \$8  &  \$3 &  -  & 9 \\ \hline
		\textit{array$_9$} & 4   & $d$  & \$3  &  \$0 &  -  & - \\ \hline
	\end{tabular}
\end{table}

The utility-array structure of $ S_{3} $ from  Table \ref{table:db} can be illustrated as shown in Table \ref{table:utilityarray}. For instance, in Table \ref{table:utilityarray}, the distinct items of $ S_{3} $ are (\textit{a}), (\textit{b}), (\textit{c}), (\textit{d}), and (\textit{e}), respectively. The first occurred positions of \textit{a}), (\textit{b}), (\textit{c}), (\textit{d}), and (\textit{e}) in $ S_{3} $ are 1, 1, 5, 9, and 8, respectively. Consider the item (\textit{b}) in Table \ref{table:utilityarray}, it occurs in position 2 (\textit{array$_2$}), 4 (\textit{array$_4$}), and 6 (\textit{array$_6$}), respectively. In its first occurred position 2 in $ S_{3} $, the element ID is 1,  the utility of item (\textit{b}) is \$10, and the remaining utility of item (\textit{b}) is measured as \$72, which is the remaining utility. In addition, the next position of an item (\textit{b}) in $ S_{3} $ is 4 (\textit{array$_4$}), and the position of the first item in next element ([(\textit{a}:2) (\textit{b}:3) (\textit{c}:1)]) in $ S_{3} $ is 3. Thus, the utility-array can store the necessary information (e.g, utility, position, and sequence order), which can be built during the construction of LS-tree.

In the LS-tree, the utility-array  of each node/sequence can be directly constructed from the utility-array of its prefix node, without scanning the database again. We adopt the projection mechanism to recursively construct a series of utility-arrays, as well as their extension nodes (aka super-sequences) with the same prefix in the LS-tree. For each node/sequence in the LS-tree, the transactions with the processed node/sequence are then converted into a utility-array, which is attached to the projected utility-array of the processed node. Thus, the utilities and the upper-bounds of the potential HUSPs (or candidates) can be efficiently measured.

\subsection{Two Concatenations}
In this section, two basic operations, \textit{I}-\textit{Concatenation} and \textit{S}-\textit{Concatenation}, in the LS-tree are introduced. In the LS-tree, they are used to generate the combinations of the prefix sequences. In other words, the two operations produce the offspring (also called extensions) of the processed nodes.  

\begin{definition}
	Let $ t $ and  $ i_{j} $ represent a sequence and an item, respectively. Additionally, let $<$$t \oplus i_{j}$$>$$_{I-Concatenation}$ represent the \textit{I-Concatenation} of $ t $ with $ i_{j} $, in which  $ i_{j} $ is appended to the last element of a sequence $ t $. Moreover, let $<$$t \oplus i_{j}$$>$$_{S-Concatenation}$ represent the \textit{S-Concatenation} of $ t $ with $ i_{j} $, in which $ i_{j} $ is added as a new element/itemset to the last of $ t $. Thus, the size of new sequence $<$$t \oplus i_{j}$$>$$_{S-Concatenation}$ increases by one, but the size of new sequence $<$$t \oplus i_{j}$$>$$_{I-Concatenation}$ does not change.
\end{definition}

For instance, we generate the new combinations with a sequence $ t $ = $<$[$b$], [$c$]$>$ and a new item $(a)$. The results are $<$$t \oplus a$$>$$_{I-Concatenation}$ = $<$[\textit{b}], [\textit{ca}]$>$ and $<$$t \oplus a$$>$$_{S-Concatenation}$ = $<$[\textit{b}], [\textit{c}], [\textit{a}]$>$. Obviously, the number of itemsets in $ t $ does not change after \textit{I}-\textit{Concatenation}. However, the number of itemsets in new sequence generated by \textit{S-Concatenation} increases by one. Thus, the complete candidates can be easily produced by the two operations in the search space.

As previously described, utility-array \cite{gan2019proum} is efficient for retrieving the utilities and upper-bounds of the candidate, and the actual HUSPs can be completely discovered. However, a sequence may have multiple matches in a \textit{q}-sequence, and therefore the sequence may obtain multiple utilities. It is thus necessary to obtain the accurate positions of the matches, which can be used to measure the utilities and the upper-bound values of the processed node (sequence). For convenience, we introduce the concepts for concatenation.

\begin{definition}
	\rm Let \textit{t} and \textit{s} represent a sequence and a \textit{q}-sequence, respectively, and assume that $ t \sim s $. The concatenation point of \textit{t} in \textit{s} is defined as the position of the final (last) item within each match. The first concatenation point is called the \textit{start point}. 
\end{definition}

For instance, in Table \ref{table:db}, we assume that a sequence $ t $ = $<$[\textit{b}], [\textit{c}]$>$; and it has three matches in $ S_{3} $ = $ < $[(\textit{a}:3) (\textit{b}:2)], [(\textit{a}:2) (\textit{b}:3) (\textit{c}:1)], [(\textit{b}:4) (\textit{c}:5) (\textit{e}:4)], [(\textit{d}:3)]$ > $ such that $<$[\textit{b}:2], [\textit{c}:1]$>$, $<$[\textit{b}:2],  [\textit{c}:5]$>$, and $<$[\textit{b}:3], [\textit{c}:5]$>$. Thus, the concatenation positions of $ t $ in $ S_{3} $ are 5, 7, and 7, respectively. The start point of $<$[\textit{b}], [\textit{c}]$>$ in $ S_{3} $ is thus set as 5.

Following the definition of \textit{I-Concatenation}, a new additive item is appended to the last itemset of the processed sequence. Based on \textit{I-Concatenation}, the candidate items are the items occurring in the same elements/itemsets in which the concatenation points arise. For instance, the candidate items for \textit{I-Concatenation} of $<$[\textit{b}:2], [\textit{c}:1]$>$ are \{(\textit{a}:2), (\textit{b}:3)\}. For the \textit{S-Concatenation}, the new items are appended to the last of the sequence as the new itemset. For the \textit{S}-\textit{Concatenation} operation, the items in the elements/itemsets after the start point are then considered to be the candidate items of each transaction. In the illustrated example, the start point is equal to 5, which appears in the second element in $S_3$. Thus, the items after the second element are considered as the candidate items for \textit{S-Concatenation}, such as \{(\textit{b}:4), (\textit{c}:5), (\textit{e}:4), (\textit{d}:3)\} in $S_3$. Since multiple matches of a sequence \textit{t} in a \textit{q}-sequence can be obtained, the highest utility of the items in the sequence \textit{t} is then regarded as the utility of \textit{t} for \textit{q}-sequence. 

In addition, we define the order of sequences in the proposed USPT algorithm and the LS-tree. Given two sequences, $ t_{a} $ and $ t_{b} $, then $ t_{a} \prec t_{b} $, if 1) the length of $ t_{a} $ is less than that of $ t_{b} $; 2) $ t_{a} $ is \textit{I}-\textit{Concatenation} from $ t $, while $ t_{b} $ is \textit{S}-\textit{Concatenation} from $ t $; and 3) $ t_{a} $ and $ t_{b} $ are both \textit{I}-\textit{Concatenation} or \textit{S}-\textit{Concatenation} from \textit{t}, and the additive item in $ t_{a} $ is alphabetically smaller than that in $ t_{b} $. This order of sequences is also suitable for $ q $-sequences. For example, $<$[\textit{a}]$>$ $\prec$ $<$[\textit{ab}]$>$ $\prec$ $<$[\textit{a}], [\textit{a}]$>$ $\prec$ $<$[\textit{a}], [\textit{c}]$>$. To retrieve the complete set of HUSPs, all candidates of the two operations (concatenations) are enumerated with the lexicographic order in the LS-tree.

\subsection{Upper Bounds and Pruning Strategies}
Based on the concepts of LS-tree and utility-array, by adopting the depth-first search strategy and two concatenations, the proposed USPT algorithm can successfully identify the complete HUSPs. However, this process may lead to the problem of combinatorial explosion, since a huge number of candidates may be exhaustively  explored in the LS-tree. The reason for this is that the downward closure property \cite{agrawal1994fast} (also called Apriori property) cannot be directly applied to HUSPM. Therefore, to solve the problem of combinatorial explosion for mining the HUSPs, a new property is required. Several concepts of sequences and \textit{q}-sequences are introduced below.

\begin{definition}
	\rm Assume two $ q $-sequences such that \textit{t} and $ t' $, if $ t \subseteq t' $ holds, then the extension of $ t \in t' $ is the suffix of $ t' $ after $ t$. Thus, it can be denoted as $<$$ t' $-$ t $$>_{rest}$. Let \textit{t} and \textit{s} represent a sequence and a \textit{q}-sequence, respectively. If $ t \sim s_{k} \wedge s_{k} \subseteq s $ $ (t \subseteq s) $ holds, then the superset (extension) of $ t $ in $ s $ is the rest of $ s $ after $ s_{k} $, which can be denoted as $<$$s$ - $t$$>_{rest}$, and $ s_{k} $ represents the first match of $ t $ in $ s $. 
\end{definition}

For instance, two \textit{q}-sequences such as $ s $ = $<$[\textit{b}:2], [\textit{c}:5]$>$ and $ S_{3} $ are given, where $ S_{3} $  is shown in Table \ref{table:db}. The superset of $ s $ in $ S_{3} $ is $<$$S_{3}$ - $s$$>$$_{rest}$ = $<$[(\textit{e}:4)], [(\textit{d}:3)]$>$. Given a sequence $ t $ = $<$[\textit{b}], [\textit{c}]$>$, three matches of $ t $ in $ S_{3} $ are existed, and the first one is $<$[\textit{b}:2], [\textit{c}:1]$>$. Thus, $<$$S_{3}$ - $t$$>$$_{rest}$ = $<$[(\textit{b}:4) (\textit{c}:5) (\textit{e}:4)], [(\textit{d}:3)]$>$.

\begin{definition}
	\rm The set of extension items of a sequence $ t $ in a sequential database is denoted as $ I(t)_{rest} $, and defined as:
	\begin{equation}
	I(t)_{rest} = \{i_{j}|i_{j}\in <s - t>_{rest} \wedge  t  \subseteq  s \wedge s \in QSD\}.
	\end{equation}
\end{definition}

\begin{definition}
	\label{def:swu}
	\rm The sequence-weighted utilization (\textit{SWU}) \cite{yin2012uspan}  of a sequence $ t $ in a $ QSD $ is denoted as $ SWU(t) $, and defined as:
	\begin{equation}
	SWU(t) = \sum_{s\in QSD} \{u(s)|t \subseteq s\}.
	\end{equation}
\end{definition}

From the given example, because $<$$S_{3}$ - $t$$>$$_{rest}$ = $<$[[(\textit{b}:4) (\textit{c}:5) (\textit{e}:4)], [(\textit{d}:3)]$>$ in $S_{3}$, \textit{I}($<$[\textit{b}], [\textit{c}]$>$)$_{rest}$ = \{$ b, c, e, d $\}. In Table \ref{table:db}, $ SWU $($<$\textit{b}$>$) = $ u(S_{1}) $ + $ u(S_{2}) $ + $ u(S_{3}) $ + $ u(S_{4}) $ + $ u(S_{5}) $ + $ u(S_{6}) $= \$56 + \$67 + \$94 + \$67 + \$76 + \$81 = \$441, and $ SWU $($<$\textit{f}$>$) = $ u(S_{6})$ = \$81.
Consider $t$ = $<$[\textit{b}], [\textit{c}]$>$, \textit{SWU}($<$[\textit{b}], [\textit{c}]$>$) = $ u(S_{1}) $ + $ u(S_{2}) $ + $ u(S_{3}) $ + $ u(S_{4}) $ + $ u(S_{5}) $ = \$56 + \$67 + \$94 + \$67 + \$76 = \$360.

\begin{theorem}[\textbf{\underline{S}}equence-\textbf{\underline{W}}eighted \textbf{\underline{D}}ownward \textbf{\underline{C}}losure property, SWDC \cite{yin2012uspan}]
	\label{theorem-swu}
	Given a sequential database $ QSD $ and two sequences $ t $ and $ t' $. Based on the concepts of \textit{SWU}, if $ t \subseteq t' $ holds, then we can obtain the following property as:
	\begin{equation}
		SWU(t')\leq SWU(t).
	\end{equation}
\end{theorem}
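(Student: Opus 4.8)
The plan is to prove the inequality $SWU(t') \leq SWU(t)$ directly from Definition~\ref{def:swu} by comparing the two sums term by term. Recall that $SWU(t) = \sum_{s \in QSD}\{u(s) \mid t \subseteq s\}$, so $SWU(t)$ is the sum of the total utilities $u(s)$ over exactly those $q$-sequences $s$ in the database that contain $t$. The key structural observation is that the containment relation $\subseteq$ on sequences is transitive: if $t \subseteq t'$ and $t' \subseteq s$, then $t \subseteq s$. Consequently, every $q$-sequence that contains $t'$ must also contain $t$.

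First I would fix the two sequences $t$ and $t'$ with $t \subseteq t'$, and introduce the two supporting sets of database sequences, namely those $s \in QSD$ with $t' \subseteq s$ and those with $t \subseteq s$. The heart of the argument is the set inclusion $\{s \in QSD \mid t' \subseteq s\} \subseteq \{s \in QSD \mid t \subseteq s\}$, which follows immediately from transitivity of containment applied to each individual $s$. Next I would note that each summand $u(s)$ is a utility of a $q$-sequence, which is a sum of products of nonnegative quantities and nonnegative unit utilities (by Definitions of $u(v)$ and $u(s)$), hence $u(s) \geq 0$ for every $s$. Therefore summing the nonnegative quantity $u(s)$ over the smaller index set (sequences containing $t'$) cannot exceed summing it over the larger index set (sequences containing $t$), which yields $SWU(t') \leq SWU(t)$ directly.

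To make the comparison fully rigorous I would write $SWU(t) = SWU(t') + \sum_{s \,:\, t \subseteq s,\; t' \not\subseteq s} u(s)$, splitting the index set of $SWU(t)$ into those sequences that also contain $t'$ and those that contain $t$ but not $t'$. Since the second sum is a sum of nonnegative terms, it is $\geq 0$, and the desired inequality follows. This decomposition also makes transparent exactly where the hypothesis $t \subseteq t'$ is used: it is precisely what guarantees the first sum over $\{s \mid t' \subseteq s\}$ reproduces $SWU(t')$ with the same summands.

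The main (and really only) obstacle is to state cleanly that containment is transitive for the sequence notion defined in the excerpt, so that $t' \subseteq s$ indeed forces $t \subseteq s$; this is a routine check from the definition of $t \subseteq t'$ via an order-preserving index map $k_1 \leq k_2 \leq \cdots$, composing the map witnessing $t \subseteq t'$ with the one witnessing $t' \subseteq s$. Everything else is bookkeeping: the nonnegativity of $u(s)$ and the monotonicity of a sum of nonnegative terms under enlargement of its index set. No combinatorial difficulty arises because, unlike the actual utility $u(t)$, the $SWU$ over-estimator attaches the \emph{whole} sequence utility $u(s)$ to any supporting $s$, so no delicate reasoning about multiple matches or the maximal-utility choice is needed here.
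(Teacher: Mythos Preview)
Your argument is correct and is exactly the standard proof of the sequence-weighted downward closure property: transitivity of sequence containment gives $\{s \in QSD \mid t' \subseteq s\} \subseteq \{s \in QSD \mid t \subseteq s\}$, and since each $u(s) \geq 0$ the sum over the smaller index set is bounded by the sum over the larger one. The paper itself does not supply a proof for this theorem; it is stated with a citation to prior work (USpan, \cite{yin2012uspan}) and used as a known result, so there is nothing further to compare against.
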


\begin{theorem}
	\label{theorem-swu-upper-bound}
	Given a sequential database $ QSD $ and a sequence $ t $, then the following equation holds:
	\begin{equation}
		u(t)\leq SWU(t).
	\end{equation}
\end{theorem}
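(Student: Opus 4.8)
The plan is to reduce the global inequality to a pointwise one over each $q$-sequence and then sum. By Definition~9, $u(t) = \sum_{s \in QSD}\{u(t,s) \mid t \subseteq s\}$, and by Definition~\ref{def:swu}, $SWU(t) = \sum_{s \in QSD}\{u(s) \mid t \subseteq s\}$. Both sums range over exactly the same index set, namely those $s \in QSD$ with $t \subseteq s$. Hence it suffices to establish the per-sequence bound $u(t,s) \leq u(s)$ for every such $s$, and then add these inequalities term by term. First I would isolate this local claim and verify that the two summation domains truly coincide, so that the final comparison is legitimate.

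To prove the local claim, recall from Definition~9 that $u(t,s) = \max\{u(s_k) \mid t \sim s_k \wedge s_k \subseteq s\}$, a maximum taken over all matches of $t$ inside $s$. Since the maximum of a finite set is bounded above by any uniform upper bound on its members, it is enough to show $u(s_k) \leq u(s)$ for each individual match $s_k$. Here $s_k$ is itself a $q$-sequence with $s_k \subseteq s$, so by Definition~7 every $q$-item of $s_k$ occurs, with the same quantity, at a corresponding position of $s$. Consequently $u(s_k) = \sum_{v \in s_k} u(v)$ accounts for a subcollection of the per-item utilities $q(i_j,v)\cdot pr(i_j)$ that make up $u(s) = \sum_{v \in s} u(v)$.

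The crux is therefore the nonnegativity of every individual item utility: since quantities $q(i_j,v)$ and unit utilities $pr(i_j)$ are positive, each term $u(i_j,v) \geq 0$, so omitting the item-utilities of $s$ not picked up by the match $s_k$ can only decrease the total. This gives
\begin{equation}
u(s_k) \;=\; \sum_{v \in s_k} u(v) \;\leq\; \sum_{v \in s} u(v) \;=\; u(s).
\end{equation}
Taking the maximum over all matches yields $u(t,s) \leq u(s)$, and summing over all $s$ with $t \subseteq s$ gives $u(t) \leq SWU(t)$, as desired.

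I expect the main obstacle to be bookkeeping rather than depth: one must argue carefully that a match $s_k$ really corresponds to a subset of the position-indexed item utilities of $s$ (so that $u(s_k)$ is genuinely a partial sum of $u(s)$), handling the subtlety that the same sequence $t$ can have several matches in $s$ with different utilities. The maximum-based utility measure of Definition~9 is precisely what lets us dispatch all matches at once through a single uniform bound, so no match-by-match case analysis beyond the nonnegativity observation is needed.
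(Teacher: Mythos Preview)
Your argument is correct. The paper does not actually supply a proof of this theorem; it is stated as a known fact (the \textit{SWU} concept is attributed to \cite{yin2012uspan}) and then immediately used. So there is no ``paper's own proof'' to compare against here.

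Your decomposition is the natural one: both $u(t)$ and $SWU(t)$ are sums over the same set of $q$-sequences $\{s \in QSD : t \subseteq s\}$, so the inequality reduces to the per-sequence bound $u(t,s) \leq u(s)$. Since $u(t,s)$ is the maximum of $u(s_k)$ over matches $s_k \subseteq s$, and each $u(s_k)$ is a subsum of the nonnegative item-utilities composing $u(s)$, the bound follows. The only place to be slightly careful is exactly where you flag it: a match $s_k$ picks out, via the containment of Definition~7, a set of distinct positions in $s$, so the corresponding item utilities really are a subcollection (with no double counting) of those summed in $u(s)$. You handle this correctly. This is essentially the same style of one-line argument the paper does give for the analogous $SEU$ bound in Theorem~\ref{theorem-pu-upper-bound}.
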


The SWDC property and Theorem \ref{theorem-swu-upper-bound} ensure that if the $ SWU $ of a sequence $ t $ is no larger than a threshold, then the utility of $ t $ and any of its super-sequences are also less than this threshold. Therefore, numerous unpromising candidates can be pruned. To speed up the mining performance, the concept of sequence-weighted utilization (\textit{SWU}) \cite{yin2012uspan} is proposed to maintain the extended downward closure property for mining HUSPs. Thus, the search space can be reduced and some unpromising candidates can be promptly pruned.   However, the $ SWU $ of a sequence \textit{t} is usually greater than its actual utility of \textit{t}. To make the designed algorithm more efficient, the remaining utility model \cite{yin2012uspan} is introduced to estimate the lower upper-bound values of the candidates.

\begin{definition}
	\rm Let $ SEU(t) $ denote the sequence extension utility of a sequence $ t $ in a  sequential database $ QSD $. As an upper bound, the sequence extension utility (\textit{SEU}) \cite{gan2019proum}  can be defined as:
	\begin{equation}
		SEU(t) = \sum_{s\in QSD} \{u(t,s) + u(<s-t>_{rest})|t \subseteq s\}.
	\end{equation}
\end{definition}

It should be noted that \textit{u}($<$\textit{s} - \textit{t}$>_{rest}$) is the remaining utility with respect to the first match of $ t $ in $ s $, which is kept as the fourth field in the utility-array. For instance, in Table \ref{table:db}, suppose a sequence $ t $ = $<$[\textit{b}], [\textit{c}]$>$, \textit{u}($<$$S_{2}$ - $t$$>$$_{rest}$) = \textit{u}($<$(\textit{d}:3), (\textit{e}:4)$>$, $S_2)$ = \$3 + \$8 = \$11. Thus, $ SEU(t) $ = ($ u(t, S_{1})$ + \textit{u}($<$$S_{1}$ - $t$$>$$_{rest}$)) + ($ u(t, S_{2})$ + \textit{u}($<$$S_{2}$ - $t$$>$$_{rest}$)) + ($ u(t, S_{3})$ + \textit{u}($<$$S_{3}$ - $t$$>$$_{rest}$)) + ($ u(t, S_{4})$ + \textit{u}($<$$S_{4}$ - $t$$>$$_{rest}$)) + ($ u(t, S_{5})$ + \textit{u}($<$$S_{5}$ - $t$$>$$_{rest}$)) = (\$27 + \$10) + (\$31 + \$11) + (\$30 + \$46) + (\$37 + \$11) + (\$35 + \$11) = \$249, which is less than \textit{SWU}($t$)( = \$360).

\begin{theorem}
	\label{theorem-pu}
	Suppose two sequences, $ t $ and $ t' $, in \textit{QSD}. If $ t \subseteq t' $ holds, then: 
	\begin{equation}
		SEU(t') \leq SEU(t).
	\end{equation}		
\end{theorem}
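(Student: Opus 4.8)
The plan is to mirror the structure used for the SWU property in Theorem~\ref{theorem-swu}: reduce the global inequality $SEU(t')\le SEU(t)$ to a per-sequence inequality and then sum. First I would compare the two summation domains. Because $t\subseteq t'$, any $q$-sequence $s$ with $t'\subseteq s$ also satisfies $t\subseteq s$, so $\{\,s\in QSD : t'\subseteq s\,\}\subseteq\{\,s\in QSD : t\subseteq s\,\}$ and $SEU(t')$ is a sum over a subset of the terms of $SEU(t)$. Every summand of $SEU$ is a sum of item utilities, which are non-negative (each equals a quantity times a positive unit utility), so the terms of $SEU(t)$ coming from sequences that contain $t$ but not $t'$ are non-negative and may be dropped. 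It therefore suffices to prove, for each individual $s$ with $t'\subseteq s$, the pointwise bound $u(t',s)+u(\langle s-t'\rangle_{rest})\le u(t,s)+u(\langle s-t\rangle_{rest})$; summing over all such $s$ and restoring the dropped non-negative terms then gives the theorem.

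For the pointwise step I would use that in the LS-tree $t'$ is obtained from $t$ by appending items at the tail through $I$- or $S$-Concatenation, so a match of $t'$ in $s$ splits into a match of its $t$-part followed by the extension items, which occupy positions strictly after the concatenation point of that $t$-part. Writing $u(t',s)=u_t+u_e$ for the best such match, with $u_t$ the utility of the $t$-part and $u_e$ the utility of the appended items, I would bound $u_t\le u(t,s)$ by the maximality in the definition of $u(t,s)$. The remaining work is to show $u_e+u(\langle s-t'\rangle_{rest})\le u(\langle s-t\rangle_{rest})$. The driving structural fact is that the first concatenation (start) point of $t$ in $s$ precedes that of $t'$, so as position sets $\langle s-t'\rangle_{rest}\subseteq\langle s-t\rangle_{rest}$, and the appended items likewise sit after the start point of $t$; since these contributions occupy distinct positions and all utilities are non-negative, their combined utility cannot exceed that of the whole suffix $\langle s-t\rangle_{rest}$. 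Combining the two estimates yields the pointwise inequality.

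I expect the main obstacle to be the bookkeeping of which match is used on each side, because a sequence may match $s$ in several places: $u(t,s)$ is defined through a maximum-utility match while $u(\langle s-t\rangle_{rest})$ is tied to the first match, and the same mixed convention appears for $t'$. The delicate point is to verify that the mass counted by $u(t',s)+u(\langle s-t'\rangle_{rest})$ never counts a position twice and is genuinely dominated by $u(t,s)+u(\langle s-t\rangle_{rest})$ under this convention; I would do this by treating each side as accounting for two disjoint position sets, the matched items and the strictly-later remaining items, and checking containment of these sets rather than relying on any cancellation. A second subtlety worth stating explicitly is that the argument really needs $t'$ to extend $t$ at the tail, as the LS-tree guarantees: for a general embedding $t\subseteq t'$ the extension items could fall before the tail of $t$ and the remaining-region containment would fail, so the relevant (and used) case is that of forward extensions, where the positional reasoning above goes through.
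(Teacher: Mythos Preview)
The paper does not actually prove this theorem: it is stated without argument and the text moves directly to Theorem~\ref{theorem-pu-upper-bound}. Your outline therefore already says more than the paper does, and the overall strategy---shrink the summation domain via $t\subseteq t'\Rightarrow\{s:t'\subseteq s\}\subseteq\{s:t\subseteq s\}$, then establish a per-sequence inequality and sum---is the natural one and mirrors how the analogous \textit{SWU} monotonicity is obtained.

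The obstacle you single out is real, and under the paper's literal definitions it is more than bookkeeping: with $u(t,s)$ taken as the \emph{maximum}-utility match but $u(\langle s-t\rangle_{rest})$ computed from the \emph{first} match, the per-sequence inequality you target can fail. For instance, with all unit utilities equal to~$1$, $s=\langle[(a{:}1)(b{:}1)],[(a{:}1)(b{:}100)]\rangle$, $t=\langle[a]\rangle$ and $t'=\langle[ab]\rangle$, one gets $u(t,s)+u(\langle s{-}t\rangle_{rest})=1+102=103$ while $u(t',s)+u(\langle s{-}t'\rangle_{rest})=101+101=202$; the same example also breaks $SEU\le SWU$ in Theorem~\ref{three-upper-bounds}. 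The culprit is exactly the double-counting you anticipated: the best match of $t'$ can lie inside the rest-region determined by its own first match, so the ``two disjoint position sets'' you plan to compare are not in fact disjoint, and the containment step does not go through. Your argument \emph{does} go through cleanly if $SEU$ is read with the first match in both components (equivalently, as the start-point utility plus the $ru$ value stored there in the utility-array), which is also what is needed for $SEU(t)\le u(s)=SWU(t)$. So the right fix is to adopt and state that consistent reading of $SEU$; under it your position-containment reasoning is correct and the proof closes exactly as you sketch.
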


\begin{theorem}
	\label{theorem-pu-upper-bound}
	Given a sequence \textit{t} in $ QSD $, then:
	\begin{equation}
		u(t)\leq SEU(t).
	\end{equation}	
\end{theorem}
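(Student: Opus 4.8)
The plan is to compare $u(t)$ and $SEU(t)$ directly, exploiting the fact that both are defined as sums over exactly the same collection of $q$-sequences, namely those $s \in QSD$ with $t \subseteq s$. First I would write both quantities explicitly over this common index set:
\[
u(t) = \sum_{s \in QSD,\, t \subseteq s} u(t,s), \qquad SEU(t) = \sum_{s \in QSD,\, t \subseteq s} \big(u(t,s) + u(\langle s-t \rangle_{rest})\big).
\]
Subtracting termwise, the $u(t,s)$ contributions cancel and leave
\[
SEU(t) - u(t) = \sum_{s \in QSD,\, t \subseteq s} u(\langle s-t \rangle_{rest}).
\]

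The second step is to argue that every summand on the right is non-negative. This follows from the utility model itself: by Definition of $u(i_{j},v)$, each item utility is a product $q(i_{j},v)\times pr(i_{j})$ of a positive quantity and a positive unit utility, so $u(v)\geq 0$ for every $q$-itemset; summing over the itemsets making up the extension gives $u(\langle s-t\rangle_{rest})\geq 0$ for each $s$. Hence the right-hand side is a sum of non-negative terms and is itself non-negative, which immediately yields $u(t) \leq SEU(t)$.

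The only subtlety worth flagging is a mild definitional mismatch: in $u(t)$ the term $u(t,s)$ is the utility of the \emph{maximum}-utility match of $t$ in $s$, whereas $\langle s-t\rangle_{rest}$ is defined relative to the \emph{first} match of $t$ in $s$, so the two summands inside $SEU(t)$ need not originate from the same match. This does not threaten the argument, since the inequality only requires that a non-negative extension utility is being added to $u(t,s)$, irrespective of which match defines the remainder. I therefore do not expect a genuine obstacle; the result is essentially bookkeeping, and the single point deserving care is confirming both that the remaining utility is non-negative and that the two sums range over an identical index set $\{s\in QSD : t\subseteq s\}$ so that the cancellation is exact.
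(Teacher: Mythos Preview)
Your proof is correct and follows essentially the same approach as the paper: the paper's one-line proof simply writes $u(t) = \sum_{s \in QSD} \{u(t,s)\mid t\subseteq s\} \leq \sum_{s\in QSD} \{u(t,s) + u(\langle s - t\rangle_{rest})\mid t \subseteq s\} = SEU(t)$, which is exactly your termwise comparison over the common index set. Your additional remarks on non-negativity of the extension utility and the max-match versus first-match mismatch are more careful than the paper itself, but the underlying argument is the same.
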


\begin{proof}
	$\displaystyle u(t)$ = $\sum_{s \in QSD} \{u(t,s)|t\subseteq s\} \leq \sum_{s\in QSD} \{u(t,s) + u(<s - t>_{rest})|t \subseteq s\}$ = $SEU(t)$.
\end{proof}

Theorems \ref{theorem-pu} and \ref{theorem-pu-upper-bound} indicate that for a sequence $ t $ in sequence database, if $ SEU(t) $ is no greater than a minimum utility threshold, then the utility of $ t $ and any of its super-sequences are no greater than this minimum utility threshold. However, the upper bounds (both $ SWU $ and $ SEU $)  cannot be directly utilized in our USPT framework since each item may obtain its individualized minimum utility threshold. Thus, by using the upper bounds of $ SWU $ and $ SEU $, the USPT framework does not hold the initial downward closure property. For instance, let $ t' $ be an extension (super-sequence) of $ t $; although the $SEU$ or $ SWU $ of $ t $ is no greater than the minimum utility threshold of $ t $, $ t' $ still has the possibility for becoming a high-utility sequential pattern. 

As mentioned, the USPT framework first studies the individualized threshold for each item in utility mining. Thus, several new theorems are presented to hold the downward closure property for correctly and completely mining HUSPs.

\begin{definition}
	\rm The potential minimum utility threshold  of a sequence $ t $ in \textit{QSD} is denoted as \textit{PMIU}$(t) $. Thus, \textit{PMIU}$(t) $ is the potential least \textit{mu} threshold among $ t $ and its extensions, and defined as:
	\begin{equation}
		PMIU(t) = min\{mu(i_{j})|i_{j}\in t \vee i_{j}\in I(t)_{rest}\}.
	\end{equation}
\end{definition}

Obviously, the concept of \textit{PMIU} is different from \textit{MIU}. The reason is that \textit{PMIU} is related to a sequence $ t $ and its extensions, while \textit{MIU} is only related to a sequence $ t $. For instance, in Table \ref{table:db}, \textit{PMIU}($<$[\textit{b}], [\textit{c}]$>$) = \textit{min}\{$mu(i_{j})|i_{j}\in$$<$[\textit{b}], [\textit{c}]$>$$\vee i_{j} \in $\{\textit{b}, \textit{c}, \textit{e}, \textit{d}\}\} = \textit{min}\{\$500, \$500, \$200, \$500\} = \$200.

\begin{theorem}
	\label{theorem:all-upper-bound}
	Based on the definition of HUSP (cf. Definition \ref{Def:HUSP}),  assume a sequence $ t $, and let  \textit{UB}$(t) $ be the upper-bound on utility of $ t $. We have the theorem as: if $ UB(t) < PMIU(t) $, then $ t $ and any of its super-sequences will not be the HUSPs.
\end{theorem}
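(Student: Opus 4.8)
The plan is to combine three ingredients: the two properties of $UB$ already established earlier and a new comparison between $PMIU(t)$ and the $MIU$ of super-sequences. I would take $UB$ to be whichever concrete upper bound is in use (either $SWU$ or $SEU$), so that it inherits both the downward-closure property (Theorem~\ref{theorem-swu} or Theorem~\ref{theorem-pu}) and the dominance over the actual utility (Theorem~\ref{theorem-swu-upper-bound} or Theorem~\ref{theorem-pu-upper-bound}). Concretely, for any super-sequence $t'$ of $t$ with $t\subseteq t'$, these two facts give the chain $u(t')\leq UB(t')\leq UB(t)$, which already transfers the smallness of $UB(t)$ down to every candidate utility below $t$.

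First I would dispose of $t$ itself. Since the item set of $t$ is contained in the set $\{i_j\mid i_j\in t\}\cup I(t)_{rest}$ over which $PMIU(t)$ minimizes, and the minimum over a subset is at least the minimum over a superset, I obtain $PMIU(t)\leq MIU(t)$. Chaining this with the hypothesis yields $u(t)\leq UB(t)<PMIU(t)\leq MIU(t)$, so $u(t)<MIU(t)$ and $t$ violates Definition~\ref{Def:HUSP}; hence $t$ is not a HUSP.

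The core step is the super-sequence case. For any $t'$ with $t\subseteq t'$, I would argue that every item of $t'$ lies in $\{i_j\mid i_j\in t\}\cup I(t)_{rest}$: the items of $t$ are trivially present, while each additional item of $t'$ is introduced by an \textit{I}-\textit{Concatenation} or \textit{S}-\textit{Concatenation} and therefore occurs in the extension region $\langle s-t\rangle_{rest}$ of some $s\in QSD$ that contains $t$, i.e.\ it belongs to $I(t)_{rest}$ by its definition. Consequently the item set of $t'$ is a subset of the set minimized over in $PMIU(t)$, so $MIU(t')=\min\{mu(i_j)\mid i_j\in t'\}\geq PMIU(t)$. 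Combining this with the upper-bound chain gives $u(t')\leq UB(t')\leq UB(t)<PMIU(t)\leq MIU(t')$, whence $u(t')<MIU(t')$ and $t'$ is not a HUSP.

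The main obstacle I expect is justifying that the extra items of an arbitrary super-sequence $t'$ really do fall inside $I(t)_{rest}$. The subtlety is that $I(t)_{rest}$ is defined relative to the \emph{first} match $s_k$ of $t$ in each $s$, whereas $t'$ might be realized through a later match, so one must check that no super-sequence generated in the search can introduce an item outside this set. I would close this gap by appealing to the concatenation construction of the LS-tree: the candidate items for both concatenations are exactly the items occurring after the \emph{start point}, which is the first concatenation point of $t$, and these coincide with the items collected into $I(t)_{rest}$. Once this containment is secured, the inequality $MIU(t')\geq PMIU(t)$ and the remainder of the argument follow immediately.
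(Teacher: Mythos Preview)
Your proposal is correct and follows essentially the same route as the paper: both establish the item containment $\{i_j\mid i_j\in t'\}\subseteq\{i_j\mid i_j\in t\}\cup I(t)_{rest}$ to get $PMIU(t)\leq MIU(t')$, then chain through the upper bound. The only cosmetic differences are that the paper also records the intermediate inequality $PMIU(t)\leq PMIU(t')$ (via $I(t')_{rest}\subseteq I(t)_{rest}$), and it asserts the item-containment step without the careful justification you give about concatenation-generated super-sequences and the start point; your extra discussion of that subtlety is warranted, since the paper's ``super-sequence'' here really means an LS-tree extension rather than an arbitrary $t'\supseteq t$.
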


\begin{proof}
	Let $ t' $ be a extension (super-sequence) of $ t $, then $ \{i_{j} | i_{j} \in t'\} \subseteq \{i_{j} | i_{j} \in t \vee i_{j} \in I(t)_{rest}\} $ and $ I(t')_{rest} \subseteq I(t)_{rest} $ hold. It is then obtained that $ u(t') \leq UB(t') $ and $ u(t) \leq UB(t) $. Next, the following equation can be obtained:
	
	\begin{align*}
		u(t') &\leq UB(t')\\
		&< PMIU(t) = min\{mu(i_{j}) | i_{j}\in t \vee i_{j}\in I(t)_{rest}\}\\
		&\leq min\{mu(i_{j}) | i_{j}\in t' \vee i_{j}\in I(t')_{rest}\} = PMIU(t')\\		
		&\leq min\{mu(i_{j}) | i_{j}\in t'\} = MIU(t').\\
		u(t) &\leq UB(t)\\
		&< PMIU(t) = min\{mu(i_{j}) | i_{j}\in t \vee i_{j}\in I(t)_{rest}\}\\
		&\leq min\{mu(i_{j}) | i_{j}\in t\} = MIU(t).
	\end{align*}
	It is then ensured that $ t $ and $ t' $ will not be the HUSPs.
\end{proof}

Based on Theorem \ref{theorem:all-upper-bound}, if the upper-bound of a sequence $t $ is no greater than the \textit{PMIU} of $ t $, $ t $ and any of its super-sequences will not be the HUSPs. The reason for this is that the upper-bound is the maximum utility of a sequence in all possible combinations, which can be either the $SWU$ or $SEU$ of \textit{t}. In addition, the \textit{PMIU} is the least and minimum threshold of the sequence. According to the above theorems, the correctness and completeness of the designed USPT algorithm can be guaranteed for mining the HUSPs.

\textbf{\textit{PMIU}-based \textit{SEU} Strategy:} The candidate sequences whose \textit{SEU} values are no greater than their \textit{PMIU} values are discarded from the candidate set so that their child nodes (or called extensions) are not generated and explored in the LS-tree.

The candidate items for the generation, e.g., \textit{I-Concatenation} or \textit{S-Concatenation} of $ t $, are measured by their $SWU$ values to reduce the search space. If the $SWU$ value is less than the \textit{PMIU} value, the candidate item is removed since the super-sequences of $ t $ concatenated with this candidate item can not be HUSPs. The set of \textit{I-Concatenation} or \textit{S-Concatenation} sequences $ t' $ ($ t' $ is the concatenation of $ t $ and a candidate item) are determined by their $SEU$ values to check whether they are the candidate sequences. If the $SEU$ of a sequence $ t' $ is less than its \textit{PMIU}, then $ t' $ can be safely discarded since $ t' $ and any of the super-sequences of $ t' $ cannot be HUSPs.

The designed baseline USPT algorithm can be used to find the HUSPs correctly and completely. However, the search space of the baseline USPT algorithm is still huge since it utilizes the $ SWU $ and $ SEU $ over-estimated values, which is much larger than the actual utility of the pattern. Three strategies are developed here to speed up mining performance by promptly pruning the unpromising candidates in the search space. We first utilize a tighter upper-bound on utility for mining HUSPs \cite{wang2016efficiently}, and the details are given below.

\begin{definition}
	\rm The prefix extension utility (\textit{PEU}) \cite{wang2016efficiently} of a sequence $ t $ in a \textit{q}-sequence $ s $ is defined as \textit{PEU}$(t, s)$ = $max\{u(s_{k}) + u(\textnormal{$ < $}s \textnormal{$ - $} s_{k}\textnormal{$ > $}_{rest})|t \sim s_{k} \wedge s_{k} \subseteq s\}$. Note that here $u(\textnormal{$ < $}s \textnormal{$ - $} s_{k}\textnormal{$ > $}_{rest})$ is the $s_{k}$ match but not the first match of $ t $ in $ s $.
\end{definition}

For example, assume a sequence \textit{t} = $<$[\textit{b}], [\textit{c}]$>$, \textit{t} has 3 matches in $ S_{2} $, which are $<$[\textit{b}:1], [\textit{c}:3]$>$, $<$[\textit{b}:1], [\textit{c}:2]$>$ and $<$[\textit{b}:5], [\textit{c}:2]$>$. It can be obtained that $u$($<$$S_{2}$ - $<$[\textit{b}:1], [\textit{c}:3]$>>_{rest}$) = $u$($<$[(\textit{e}:2)], [(\textit{a}:1) (\textit{c}:2) (\textit{d}:3) (\textit{e}:4)]$>$) = \$25, $u$($<$$S_{2}$ - $<$[\textit{b}:1], [\textit{c}:2]$>>_{rest})$ = $u$($<$[(\textit{d}:3) (\textit{e}:4)]$>$) = \$11, and $u$($<$$S_{2}$ - $<$[\textit{b}:5], [\textit{c}:2]$>>_{rest}$) = $u$($<$[(\textit{d}:3) (\textit{e}:4)]$>$) = \$11. The utilities of 3 matches are \$14, \$11, and \$31, respectively. Thus, \textit{PEU}($<$[\textit{a}], [\textit{b}]$>$, $S_{2}$) = \textit{max}\{\$14 + \$25, \$11 + \$11, \$31 + \$11\} = \$42. Obviously, the concept of \textit{PEU} is different from \textit{SEU}.

\begin{definition}
	The prefix extension utility (\textit{PEU}) \cite{wang2016efficiently} of a sequence $ t $ in $ QSD $ is denoted as \textit{PEU}$(t) $ and defined as \textit{PEU}$(t) = \sum_{s\in QSD}\{PEU(t, s)|t\subseteq s\}$. Thus, \textit{PEU} is the maximal utility of an extension in a sequence or database.   
\end{definition}

For instance, in Table \ref{table:db}, it is assumed a sequence \textit{t} = $<$[\textit{b}], [\textit{c}]$>$, $ PEU(t) $ is measured as $ SEU(t, S_1) $ + $ SEU(t, S_2) $ + $ SEU(t, S_3) $ = $ SEU(t, S_4) $ + $ SEU(t, S_5) $ = \$37 + \$42 + \$59 + \$48 + \$46 = \$232, which is smaller than $SEU$(\textit{t}) = \$249.

\begin{theorem}
	\label{theorem:peu}
	Assume two sequences \textit{t} and $ t'$ in \textit{QSD}. If \textit{t} $ \subseteq $ $ t'$ holds, then:
	\begin{equation}
	PEU(t') \leq PEU(t).	
	\end{equation}
\end{theorem}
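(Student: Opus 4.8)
The plan is to establish this monotonicity in exactly the two-stage fashion one would use for the analogous $SEU$ bound in Theorem~\ref{theorem-pu}: first strip off the summation over the database, then prove the inequality one $q$-sequence at a time. Because $t \subseteq t'$, every $q$-sequence $s$ that contains $t'$ also contains $t$, so $\{s \in QSD \mid t' \subseteq s\} \subseteq \{s \in QSD \mid t \subseteq s\}$. Since all item utilities are positive, each per-sequence term $PEU(\cdot, s)$ is non-negative, and hence discarding the sequences that contain $t$ but not $t'$ can only shrink the total. Thus it suffices to prove, for every fixed $s$ with $t' \subseteq s$, the local inequality $PEU(t', s) \le PEU(t, s)$; summing this over the smaller index set and then enlarging the index set back to all $s \supseteq t$ yields $PEU(t') \le PEU(t)$.

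For the local step I would fix such an $s$ and let $s_{k'}$ be the match of $t'$ in $s$ that attains the maximum defining $PEU(t', s)$, so that $PEU(t', s) = u(s_{k'}) + u(<s - s_{k'}>_{rest})$, with the concatenation point of this match sitting at some position $p$ (the position of the last matched item of $t'$). I would then exploit the extension structure of the containment: recall that $<t' - t>_{rest}$ is \emph{defined} as the suffix of $t'$ after $t$, so $t$ occupies the prefix of $t'$. Restricting $s_{k'}$ to the items that match $t$ therefore produces a genuine match $s_k$ of $t$ in $s$ whose concatenation point $p_0$ satisfies $p_0 \le p$, and — crucially — the remaining matched items of $s_{k'}$ (those matching the suffix $<t'-t>_{rest}$) all occupy positions strictly after $p_0$ and no later than $p$.

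The bound then follows by reading off the remaining-utility field of the utility-array. Writing $E$ for the total utility of the matched items of $s_{k'}$ lying beyond $s_k$, and $R$ for the total utility of \emph{all} items of $s$ at positions in $(p_0, p]$, we have $u(s_{k'}) = u(s_k) + E$, and the remaining utility after $p_0$ is just the remaining utility after $p$ augmented by everything in that gap, i.e.\ $u(<s - s_k>_{rest}) = u(<s - s_{k'}>_{rest}) + R$. Since the $E$-items form a subset of the items counted by $R$, we get $E \le R$, and hence
\[
PEU(t',s) = u(s_k) + E + u(<s-s_{k'}>_{rest}) \;\le\; u(s_k) + u(<s-s_k>_{rest}) \;\le\; PEU(t, s),
\]
where the last step holds because $s_k$ is merely one of the matches of $t$ in $s$, whereas $PEU(t,s)$ maximizes over all of them.

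I expect the main obstacle to be precisely the positional bookkeeping in the second paragraph: one must argue carefully, using the prefix/extension meaning of $t \subseteq t'$ together with the concatenation-point semantics, that every extra matched item of $t'$ truly lands in $(p_0, p]$. This is the only place where the argument is delicate — it is exactly where it would fail for an arbitrary (non-prefix) subsequence embedding — and it is what makes the subset relation $E \le R$, and therefore the whole chain of inequalities, valid.
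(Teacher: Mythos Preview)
Your argument is correct and, in fact, supplies strictly more than the paper does: the paper's own ``proof'' of Theorem~\ref{theorem:peu} is merely the sentence ``Detailed proof can be referred to \cite{wang2016efficiently},'' so there is no in-paper argument to compare against. Your two-stage reduction (shrink the index set using $\{s:t'\subseteq s\}\subseteq\{s:t\subseteq s\}$, then prove $PEU(t',s)\le PEU(t,s)$ per $q$-sequence via the matched/remaining-utility decomposition $E\le R$) is exactly the natural proof of this bound, and your positional bookkeeping is sound.

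One remark worth making explicit in your write-up: you are right that the inequality genuinely requires the \emph{prefix} reading of $t\subseteq t'$ (i.e.\ $t'$ is an $I$- or $S$-Concatenation descendant of $t$ in the LS-tree), not the general subsequence containment of Definition~7. The paper uses the symbol $\subseteq$ loosely in these upper-bound theorems, but the intended meaning throughout Section~4.3 is the extension relation induced by the LS-tree; your proof uses precisely this, and your closing caveat about the non-prefix case is on point. You might state that assumption up front rather than deriving it mid-argument from the wording of the $<t'-t>_{rest}$ definition.
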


\begin{proof}
	Detailed proof can be referred to \cite{wang2016efficiently}. 
\end{proof}

Thus, Theorem \ref{theorem:peu} shows that if $PEU$  of a sequence $ t $ is no greater  than a threshold, then the $PEU$ values of its super-sequences are no greater than this threshold. The $PEU$ concept holds the downward closure property.

\begin{theorem}
	\label{theorem:upper-bound-meu}
	Suppose a sequence $ t $ in $ QSD $, then: 
	\begin{equation}
		u(t)\leq PEU(t).
	\end{equation}
\end{theorem}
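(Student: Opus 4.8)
The plan is to prove the bound term-by-term over the $q$-sequences contributing to each side, reducing the global inequality $u(t) \leq PEU(t)$ to a single per-sequence inequality $u(t,s) \leq PEU(t,s)$ for every $s \in QSD$ with $t \subseteq s$. Since $u(t) = \sum_{s\in QSD}\{u(t,s)|t\subseteq s\}$ and $PEU(t) = \sum_{s\in QSD}\{PEU(t,s)|t\subseteq s\}$ are both sums over the same index set (the $q$-sequences containing $t$), summing the per-sequence inequality immediately delivers the result.

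First I would fix a $q$-sequence $s$ with $t \subseteq s$ and note that $u(t,s) = max\{u(s_k)|t\sim s_k \wedge s_k \subseteq s\}$ and $PEU(t,s) = max\{u(s_k) + u(<s-s_k>_{rest})|t\sim s_k \wedge s_k \subseteq s\}$ are maxima taken over \emph{exactly the same} collection of matches $s_k$ of $t$ in $s$. The only difference is the quantity being maximized: the bare utility $u(s_k)$ versus the extended utility $u(s_k) + u(<s-s_k>_{rest})$.

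The key step is the observation that every remaining-utility term $u(<s-s_k>_{rest})$ is non-negative, because each internal utility $q(i_j,v)$ and each external utility $pr(i_j)$ is positive, so the utility of any $q$-itemset or suffix $q$-subsequence is $\geq 0$. Hence for each match $s_k$ we have $u(s_k) \leq u(s_k) + u(<s-s_k>_{rest})$. Letting $s_{k^*}$ denote a match attaining $u(t,s) = u(s_{k^*})$, monotonicity of the maximum then yields $u(t,s) = u(s_{k^*}) \leq u(s_{k^*}) + u(<s-s_{k^*}>_{rest}) \leq PEU(t,s)$. Summing this over all $s \in QSD$ with $t \subseteq s$ completes the argument.

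The argument is essentially routine; the only point requiring care is ensuring that the two maxima range over the identical set of matches, so that bounding each summand of the $u$-max by the corresponding summand of the $PEU$-max is legitimate, together with the explicit appeal to non-negativity of utilities. This parallels the proof of Theorem \ref{theorem-pu-upper-bound}, with the looser $SEU$ estimate replaced by the tighter per-match $PEU$ bound.
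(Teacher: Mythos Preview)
Your argument is correct: the per-sequence inequality $u(t,s)\leq PEU(t,s)$ follows because both quantities are maxima over the identical set of matches $s_k$, with the $PEU$ objective augmented by a nonnegative remaining-utility term, and summing over $s$ gives the global bound. The paper itself does not supply a proof of this theorem but simply refers the reader to \cite{wang2016efficiently}; your explicit term-by-term argument is the natural one and is exactly parallel to the paper's own proof of Theorem~\ref{theorem-pu-upper-bound} for the $SEU$ bound.
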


\begin{proof}
	Detailed proof can be referred to \cite{wang2016efficiently}. 
\end{proof}

From Theorems \ref{theorem:all-upper-bound}, \ref{theorem:peu}, and \ref{theorem:upper-bound-meu}, the completeness and the correctness of the mined HUSPs can be ensured. If the \textit{PEU} of a sequence $ t $ is no greater than the \textit{PMIU} of $ t $, then the utility of $ t $ is no greater than the $MIU$ value of \textit{t}; the utilities of its extensions/super-sequences are no greater than their $MIU$ values ($ t $ and the super-sequences of $ t $ are not HUSPs).

\begin{theorem}
	\label{three-upper-bounds}
	Suppose a sequence \textit{t} in \textit{QSD}, then:
	\begin{equation}
		PEU(t)\leq SEU(t)\leq SWU(t).
	\end{equation}
\end{theorem}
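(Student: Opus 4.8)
The plan is to prove both inequalities \emph{sequence by sequence}. Since $PEU$, $SEU$ and $SWU$ are all sums over the same index set $\{s\in QSD \mid t\subseteq s\}$, it suffices to establish, for every $q$-sequence $s$ with $t\subseteq s$, the two local bounds $PEU(t,s)\leq u(t,s)+u(<s-t>_{rest})$ and $u(t,s)+u(<s-t>_{rest})\leq u(s)$; summing these over $s$ then produces the chain $PEU(t)\leq SEU(t)\leq SWU(t)$. Throughout I would exploit two elementary facts: that every item utility is nonnegative, and that an earlier concatenation point leaves a larger remaining suffix.

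First I would handle the right-hand inequality $SEU\leq SWU$. Fix the first match $s_k$ of $t$ in $s$, that is, the match whose concatenation point is the start point. By the definition of $<s-t>_{rest}$, this suffix consists exactly of the items of $s$ lying strictly after that concatenation point, whereas the items realizing $s_k$ all lie at or before it. Hence the matched items and the remaining items are disjoint subsets of the items of $s$, so each item utility is counted at most once and $u(s_k)+u(<s-t>_{rest})\leq u(s)$. The delicate point is that the utility term in $SEU$ is $u(t,s)=\max\{u(s_k)\mid t\sim s_k\wedge s_k\subseteq s\}$ rather than the first-match utility, so I must argue that replacing $u(s_k)$ by this maximum remains compatible with the positional disjointness—matching the position at which the maximising match ends against the suffix that is actually deleted.

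Next I would handle the left-hand inequality $PEU\leq SEU$, which is the cleaner direction. For an arbitrary match $s_k$ of $t$ in $s$, the defining maximum gives $u(s_k)\leq u(t,s)$, and because the first match attains the smallest concatenation point its remaining suffix contains the remaining suffix of every other match, whence $u(<s-s_k>_{rest})\leq u(<s-t>_{rest})$. Adding these and then taking the maximum over all matches $s_k$ yields $PEU(t,s)=\max_{s_k}\{u(s_k)+u(<s-s_k>_{rest})\}\leq u(t,s)+u(<s-t>_{rest})$, as required.

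I expect the main obstacle to be the asymmetry between the maximum-utility match, which defines $u(t,s)$, and the first match, which anchors $u(<s-t>_{rest})$. For $PEU\leq SEU$ this asymmetry is harmless, since $PEU$ already pairs each match with its own (smaller) suffix. For $SEU\leq SWU$, however, one must verify that the item utilities aggregated in $u(t,s)$ do not overlap those aggregated in $u(<s-t>_{rest})$; the disjointness-of-positions observation is precisely the tool that rules out double counting, and carrying it through carefully—rather than relying on the loose intuition that an over-estimate is automatically bounded by the total—is the crux of the proof.
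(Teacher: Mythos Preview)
The paper states this theorem without proof, so there is no argument of its own to compare against; what follows is an assessment of your proposal on its merits.

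Your treatment of $PEU(t)\le SEU(t)$ is sound. For any match $s_k$ of $t$ in $s$ you have $u(s_k)\le u(t,s)$ by definition of the maximum, and since the first match has the earliest concatenation point its suffix contains every other suffix, so $u(\langle s-s_k\rangle_{rest})\le u(\langle s-t\rangle_{rest})$. Adding and maximising over $s_k$ gives $PEU(t,s)\le u(t,s)+u(\langle s-t\rangle_{rest})$; summing over $s$ yields $PEU(t)\le SEU(t)$.

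Your treatment of $SEU(t)\le SWU(t)$, however, has a real gap, and it is precisely the asymmetry you identify but do not resolve. The local inequality you need is $u(t,s)+u(\langle s-t\rangle_{rest})\le u(s)$, where the first term is the utility of the \emph{maximising} match and the second is the remaining utility after the \emph{first} match. These two position sets can overlap, so the disjointness argument fails. A concrete instance: take $s=\langle[(a{:}1)],[(a{:}100)]\rangle$ with $pr(a)=1$ and $t=\langle[a]\rangle$. Then $u(t,s)=100$, the first match ends at position~$1$, so $u(\langle s-t\rangle_{rest})=100$, and $u(t,s)+u(\langle s-t\rangle_{rest})=200>101=u(s)$. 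Under the paper's literal definitions the per-sequence bound you rely on is therefore false, and with it $SEU\le SWU$. Your phrase ``matching the position at which the maximising match ends against the suffix that is actually deleted'' is not a proof step but a restatement of the difficulty.

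The likely resolution is definitional rather than argumentative: the paper's own numerical example for $SEU$ on $S_2$ is inconsistent with its stated ``first match'' convention for $\langle s-t\rangle_{rest}$, which suggests the intended $SEU$ pairs the remaining utility with the \emph{same} match that realises $u(t,s)$. Under that reading the two position sets are disjoint by construction and your argument for $SEU\le SWU$ goes through immediately. If you adopt the definitions as written, you should instead note that the inequality $SEU\le SWU$ need not hold per sequence and cannot be established by the route you propose.
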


Based on Theorem \ref{three-upper-bounds}, it can be seen that \textit{PEU} holds a tighter upper-bound value compared to $SEU$ and $SWU$. Thus, the designed algorithm can greatly reduce the unpromising candidates based on the \textit{PEU} model than that of the $SEU$ and $SWU$ models. The \textit{PEU} model can be used to  estimate the utility values for a candidate sequence and its super-sequences. Therefore, if the \textit{PEU} of a sequence $ t $ is no greater than the \textit{PMIU} of $ t $, then $ t $ and any of its super-sequences will not be considered as the HUSPs.  

\textbf{\textit{PMIU}-based \textit{PEU} Strategy:} The candidate sequences whose \textit{PEU} values are no greater than their \textit{PMIU} values are discarded from the candidate set so that their child nodes (or called extensions) are not generated and explored in the LS-tree. 

A huge number of candidates are generated using \textit{I-Concatenation} and \textit{S-Concatenation} for a processed sequence \textit{t}. Based on the \textit{PMIU} concept in the addressed problem, we further propose the \textit{PMIU}-based  pruning strategy (PUK), which can also be used to reduce the search space by promptly removing unpromising candidates.

\begin{theorem}
	\label{theorem:las}
	For a sequence $ t $ in $ QSD $, assume an item $ i_{j} $ is an \textit{I-Concatenation} or \textit{S-Concatenation} candidate item of $ t $. Two situations can be considered to produce the super-sequences/extensions as follows:
	
	Case 1: the maximal utility of $<$$t\oplus i_{j}$$>$$_{I-Concatenation}$ is always less than or equal to the upper bound on utility (e.g., \textit{SEU}, \textit{PEU}) of $t$;
	
	Case 2: the maximal utility of $<$$t$$\oplus i_{j}$$>$$_{S-Concatenation}$ is always less than or equal to the upper bound on utility (e.g., \textit{SEU}, \textit{PEU}) of $t$.
\end{theorem}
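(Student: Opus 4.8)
The plan is to prove both cases at once by first establishing the per-$q$-sequence inequality $u(\langle t \oplus i_j \rangle, s) \leq PEU(t,s)$ and then summing over $QSD$, finally invoking Theorem \ref{three-upper-bounds} to pass from $PEU$ to $SEU$ so that the bound holds for either upper bound named in the statement. First I would note that in either concatenation $t \subseteq \langle t \oplus i_j \rangle$, so every $q$-sequence $s$ that contains the extension also contains $t$. Fix such an $s$ and let $s'_k$ range over the matches of $\langle t \oplus i_j \rangle$ in $s$. Each such match splits canonically into a match $s_k$ of the prefix $t$ together with a single occurrence of the appended item $i_j$. For \textit{I-Concatenation}, $i_j$ sits in the same element as the final item of $s_k$ but is alphabetically later, hence its flat position is strictly after the last position of $s_k$; for \textit{S-Concatenation}, $i_j$ lies in an element strictly after the last element of $s_k$. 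In both cases the position of $i_j$ therefore falls inside the extension $\langle s - s_k \rangle_{rest}$, which by definition collects exactly the utility located strictly after $s_k$.

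Next I would bound a single match. Writing $u(s'_k) = u(s_k) + u(i_j \text{ at its position})$ and using that all item utilities are nonnegative, the contribution of the lone item $i_j$ is at most the total utility of the extension containing it, so $u(s'_k) \leq u(s_k) + u(\langle s - s_k \rangle_{rest})$. Taking the maximum over all matches $s'_k$ of the extension and comparing with $PEU(t,s) = \max\{u(s_k) + u(\langle s - s_k \rangle_{rest}) \mid t \sim s_k \wedge s_k \subseteq s\}$ yields $u(\langle t \oplus i_j \rangle, s) \leq PEU(t,s)$. Summing over all $s$ that contain the extension, and using $PEU(t,s)\geq 0$ so that restricting the sum to these $s$ only decreases it, gives $u(\langle t \oplus i_j \rangle) \leq PEU(t)$. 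Theorem \ref{three-upper-bounds} then delivers $u(\langle t \oplus i_j \rangle) \leq PEU(t) \leq SEU(t)$, which is the claim for both Case~1 and Case~2.

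The main obstacle I anticipate is the bookkeeping of matches: sequence utility is defined as the \emph{maximum} over all matches, and the remaining utility used in $SEU$ is taken with respect to the \emph{first} match, whereas the decomposition above pairs each concatenation-match with its \emph{own} prefix-match. This mismatch is precisely why $PEU$, whose definition keeps each match $s_k$ aligned with the remaining utility after that same $s_k$, is the correct intermediate target; trying to bound $u(\langle t \oplus i_j \rangle)$ against $SEU(t)$ directly would force one to reconcile a max-utility match with a first-match remaining utility, which is awkward. Routing the argument through $PEU(t,s)$ avoids this entirely, and Theorem \ref{three-upper-bounds} then recovers the $SEU$ bound at no extra cost.
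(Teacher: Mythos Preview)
Your proof is correct. Both you and the paper route the bound through $PEU(t)$ and then invoke Theorem~\ref{three-upper-bounds} to reach $SEU(t)$, but the mechanics differ. The paper's argument is purely a two-line citation chain: with $t' = \langle t \oplus i_j\rangle$, it uses Theorem~\ref{theorem:upper-bound-meu} to get $u(t') \leq PEU(t')$ and Theorem~\ref{theorem:peu} to get $PEU(t') \leq PEU(t)$, and composes them. You instead bypass $PEU(t')$ altogether and work at the per-$q$-sequence level, decomposing each match of $t'$ into a prefix match $s_k$ of $t$ plus the appended item, bounding the item's utility by $u(\langle s - s_k\rangle_{rest})$, and observing that the resulting expression is dominated termwise by $PEU(t,s)$ before summing. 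What this buys you is a self-contained argument that does not need the monotonicity of $PEU$ along subsequence inclusion as a separate lemma; it also makes transparent exactly where the position of the appended item is used, which is why your discussion of the $I$- versus $S$-Concatenation cases is more explicit than the paper's. The paper's version is terser because it treats Theorems~\ref{theorem:peu} and~\ref{theorem:upper-bound-meu} as black boxes, at the cost of hiding the per-match structure that you spell out.
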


\begin{proof}
	For Case 1, let $ t' $ = $<$$t \oplus i_{j}$$>$$_{I-Concatenation}$. Since $PEU(t')$ = $ \sum_{s\in QSD} \{PEU(t', s)|t' \subseteq s\} $ and $ PEU(t', s)\leq PEU(t, s) $. For convenience, as proven in Theorem \ref{theorem:peu}, $ PEU(t')\leq PEU(t) $. According to Theorem \ref{theorem:upper-bound-meu}, $ u(t') \leq PEU(t')$ holds. Thus, $ u(t') \leq PEU(t') = \sum_{s\in QSD} \{PEU(t', s) |t' \subseteq s \} \leq \sum_{s\in QSD} \{PEU(t, s)|t \subseteq s \} $, we have $u(t') \leq PEU(t)$. Case 2 can be verified in the same way.  
\end{proof}

Based on the \textit{PMIU}-based \textit{SEU} strategy and \textit{PEU} strategy, we further propose the general strategy to prune the unpromising $k$-sequences during spanning the LS-tree.

\textbf{\textit{PMIU}-based pruning unpromising $k$-sequences  strategy (PUK strategy):} 
Assume a sequence $ t $ in $ QSD $, then two situations can be considered as follows: 

Case 1: if $ i_{j} $ is a \textit{I-Concatenation} candidate item for $ t $ and the upper bound on utility (e.g., \textit{SEU}, \textit{PEU}) of $t$ is no greater than \textit{PMIU}($<$$t \oplus i_{j}$$>$$_{I-Concatenation}) $, then $ i_{j} $ can be removed from the set of candidate items for \textit{I-Concatenation} of $ t $;

Case 2: if $ i_{j} $ is a \textit{S-Concatenation} candidate item for $ t $ and the upper bound on utility (e.g., \textit{SEU}, \textit{PEU}) of $t$  is no greater  than \textit{PMIU}($<$$t \oplus i_{j}$$>$$_{S-Concatenation}) $, then $ i_{j} $ can be removed from the set of candidate items for \textit{S-Concatenation} of $ t $.

The \textit{PMIU}-based PUK strategy is used to promptly prune unpromising candidates for the \textit{I-Concatenation} and \textit{S-Concatenation} of $ t $ without evaluating the \textit{PEU} values for $<$$t \oplus i_{j}$$>$$_{I-Concatenation}$ and $<$$t$ $\oplus i_{j}$$>$$_{S-Concatenation}$, where $ t $ is a sequence in a dataset. The reason for this is that the upper-bound value can be easily evaluated from the utility-arrays of $ t $. Thus, the designed algorithm can reduce the computational cost with a small set of candidates for either \textit{I-Concatenation} or \textit{S-Concatenation} of $ t $.

It is important to notice that \textit{PEU} provides a tighter upper-bound than \textit{SEU}, thus the search space for mining the HUSPs can be greatly reduce. In addition, more unpromising candidates can be reduced in the search space with the developed \textit{PMIU}-based pruning strategies. For example, the candidates for \textit{I-Concatenation} or \textit{S-Concatenation} of the processed sequence $ t $ can be pruned by \textit{SEU} and PUK strategy and the utility-arrays of $ t $ will be re-calculated. After that, the candidates for \textit{I-Concatenation} and \textit{S-Concatenation} of the processed $ t $ will be measured by PUK strategy instead of their $SWU$ values.

\subsection{The Designed USPT Algorithm}

Based on the above properties of upper bounds on utility and utility-array, the main procedures of the designed USPT algorithm are introduced in Algorithm \ref{procedureOfUSPT} and Algorithm \ref{Span-LQStree}. Algorithm \ref{procedureOfUSPT} shows that the whole process of USPT has three input parameters as follows: 1) \textit{QSD}; 2) \textit{utable}; and 3) \textit{M-table}. The USPT algorithm first scans  $ QSD $ once to construct utility-array of each $s\in QSD$, denoted the set as $D.ua$ (Line 1).  After that, USPT processes each 1-sequence $ i_{j}\in QSD $ one-by-one with the lexicographic order. Based on the constructed $D.ua$, it first constructs the projected utility-array $(D.ua)|_{<i_{j}>}$ of $ i_{j} \in QSD $ to store the information (e.g., utility,  position, and sequence order) (Line 4). Then the actual overall utility, $SWU$, \textit{PMIU}, and $MIU$ of each 1-sequence are respectively calculated from the projected $(D.ua)|_{<i_{j}>}$ (Line 5). USPT also updates the remaining utilities in $D.ua$ by removing the unpromising items $SWU$($<$$i_{j}$$>$) $ < $ \textit{PMIU}($<$$i_{j}$$>$) from $D.ua$ (Line 6, adopts the \textit{SWU} strategy). For each processed 1-sequence, it is discovered as a candidate item if its $SWU$ value is no less than the \textit{PMIU} value (Lines 7 to 12). If the utility of a 1-sequence is no less than the \textit{MIU} value, then this 1-sequence is outputted as a HUSP (Lines 8-10). After that, the candidate pattern that meets the condition as $SWU$($<$$i_{j}$$>$) $ \geq $ \textit{PMIU}($<$$i_{j}$$>$) has become the \textit{prefix} of \textbf{Span-Search} function for mining \textit{prefix}-based HUSPs (Line 11).

\begin{algorithm}
	\LinesNumbered
	\caption{USPT}
	\label{procedureOfUSPT}
	\KwIn {\textit{QSD}, a sequential database with quantitative values;  \textit{utable}, a table with the unit utility of each item; \textit{M-table}, a table with the threshold of each item.} 
	\KwOut {The set of \textit{HUSPs}.}

    initialize $D.ua$ = $ \emptyset$, \textit{HUSPs} $\gets \varnothing $\;
	scan $ QSD $ once to construct utility-array of each $s\in QSD$, denoted the set as $D.ua$\;
	
	\For {each 1-sequence $ i_{j}\in QSD $}
	 {
	   construct utility-array of $<$$i_{j}$$>$ as $(D.ua)|_{<i_{j}>}$ based on $D.ua$\;
		calculate $SWU$($<$$i_{j}$$>$), $u$($<$$i_{j}$$>$), \textit{PMIU}($<$$i_{j}$$>$), and $MIU$($<$$i_{j}$$>$)\;
		update the remaining utilities in $D.ua$ by removing the unpromising items (\underline{the \textit{SWU} strategy})\;
		
		\If {$SWU$($<$$i_{j}$$>$) $ \geq $ \textit{PMIU}($<$$i_{j}$$>$) (\underline{the SWU strategy})}
	    {
		\If {$u$($<$$i_{j}$$>$) $\geq$ $MIU$($<$$i_{j}$$>$)}
		   {
		      put $<$$i_{j}$$>$ into \textit{HUSPs}\;
			}
	         call \textbf{Span-Search}($ < $$ i_{j} $$ > $, $(D.ua)|_{<i_{j}>}$)\;
			}
	}
	\Return \textit{HUSPs}
\end{algorithm}

As presented at Algorithm \ref{Span-LQStree}, the \textbf{Span-Search} procedure utilizes a depth-first search to span all possible sequences with the lexicographic order. The enumerated sequences are produced by two concatenations: \textit{I-Concatenation} and \textit{S-Concatenation}, respectively. This function first scans the projected $(D.ua)|_t$ once to obtain \textit{iItem} (the set of candidate items for \textit{I-Concatenation}) and \textit{sItem} (the set of candidate items for \textit{S-Concatenation}) (Lines 1-4). At the same time,  the $PEU$ values of the candidate items are calculated from $(D.ua)|_t$ (Line 5). Then, this procedure removes the unpromising items from two candidate sets (\textit{iItem}  and \textit{sItem}) that avoid generating the unpromising \textit{I-Concatenation} and \textit{S-Concatenation} for the prefix sequence $t$ (Lines 6-7). Based on Theorems \ref{theorem-swu} and \ref{theorem-swu-upper-bound}, these unpromising candidate item $ i_{j} $ can be safely discarded. After that, each candidate in the updated sets of \textit{iItem} and \textit{sItem} is determined one-by-one.

Details of the \textbf{Span-Search} operation are given below (Lines 8-18). For each item $ i \in iItem$, USPT generates a new sequence $ t' \leftarrow $ $I$-\textit{Concatenation}$(t, i) $, and constructs the projected utility-array $(D.ua)|_{t'}$ based on $(D.ua)|_{t}$ (Lines 9-10). Note that $(D.ua)|_{t}$ is the utility-array of the prefix $t$, thus USPT can easily construct the new projected utility-array for extension node/sequence in LS-tree. The $PEU$, \textit{PMIU}, utility, and $MIU$ of new sequence $ t'$ are then calculated from $(D.ua)|_{t'}$ (Line 11). If the utility of $ t'$ is no less than the $MIU$ of $ t'$, then $ t'$ is identified as a HUSP (Lines 13-15). If the $PEU$ of $ t'$ is no less than the \textit{PMIU} of $ t'$, then the \textbf{Span-Search} procedure is performed to discover HUSPs which are based on the prefix sequence $ t'$ (Line 16). The \textbf{Span-Search} procedure is recursively called and terminated if no candidates are generated. For each item $ i \in sItem$, it is processed in the same way (Lines 19-29). Finally, the \textbf{Span-Search} procedure returns the set of discovered HUSPs w.r.t. prefix $t$. At the end, the final complete set of HUSPs would be discovered by the designed USPT algorithm.

\begin{algorithm}
	\LinesNumbered
	\caption{Span-Search}
	\label{Span-LQStree}
	\KwIn {\textit{t}: a sequence as prefix; $(D.ua)|_t$: the projected utility-array of $t$.} 
	\KwOut {\textit{HUSPs}: the set of high-utility sequential patterns with prefix $t$.}

		 initialize $iItem$ = $ \emptyset$ and $sItem$ = $ \emptyset$\;
		  scan the projected utility-array $(D.ua)|_t$ once to: \\
		1) put $I$-Concatenation items of $t$ into \textit{iItem}; \\
		2) put $S$-Concatenation items of $t$ into \textit{sItem}; \\
		3) calculate the \textit{PEU} values of these items form $(D.ua)|_t$\;
		
		 remove unpromising items $i_j \in iItem$ that have  \textit{PEU}$(i_j) < PMIU(t)$  (\underline{the PUK strategy})\;
		 remove unpromising items $i_j \in sItem$ that have  \textit{PEU}$(i_j) < PMIU(t)$  (\underline{the PUK strategy})\;
		
		\For {each item $ i \in iItem$}
		{
		  generate new sequence $ t' \leftarrow $ $I$-\textit{Concatenation}$(t, i) $\;
		  construct the projected utility-array $(D.ua)|_{t'}$ based on $(D.ua)|_{t}$\;
		  	determine $ u $($ t'$), $ PEU $($ t'$), \textit{PMIU}($ t'$), and $ MIU $($ t'$)\;

		\If{ \textit{PEU}$(t') \geq PMIU(t')$ (\underline{the PUK strategy})} 
		{
			
			\If{$ u(t') \geq MIU(t')$}
			{
				 output $t'$ into \textit{HUSPs}\;
			}
			 call \textbf{Span-Search}$(t', (D.ua)|_{t'})$\;
		}

		}

		\For{each item $ i \in sItem$}
		{
			generate new sequence $ t' \leftarrow $ $S$-\textit{Concatenation}$(t, i) $\;
			construct the projected utility-array $(D.ua)|_{t'}$\;
			
			determine $ u $($ t'$), $ PEU $($ t'$), \textit{PMIU}($ t'$), and $ MIU $($ t'$)\;

			\If{ \textit{PEU}$(t') \geq PMIU(t')$ (\underline{the PUK strategy})}
			{
				
				\If{$ u(t') \geq MIU(t')$}
				{
					output $t'$ into \textit{HUSPs}\;
				}
				call \textbf{Span-Search}$(t', (D.ua)|_{t'})$\;
			}

		}
	
		\Return \textit{HUSPs}

\end{algorithm}

\section{An Example of the Proposed Algorithm}
\label{sec:5}
In this section, a running example is given to describe the designed USPT and its variations step-by-step. Assume that the individualized minimum utility thresholds of all items in Tables \ref{table:db} and \ref{table:profit} are respectively assigned in an \textit{M-table}, as follows: \textit{M-table} = \{$ mu(a) $, $ mu(b) $, $ mu(c) $, $ mu(d) $, $ mu(e) $, $ mu(f) $\} = \{\$500, \$500, \$500, \$500, \$200, \$70\}. Note that these values can be adjusted based on the user's preference and prior knowledge. 

To remove the unpromising items from the database, the USPT method first evaluates the $SWU$ value of each item $i_{j}$ in \textit{QSD} against $ min\{mu(i_{j})|i_{j} \in QSD\}$. The unsatisfied items are then removed from \textit{QSD} and the size of \textit{QSD} can be refined. After that, the USPT algorithm scans the refined database once to calculate the transaction utility, and the results are $\{$$ u(S_{1})$: \$56, $ u(S_{2})$: \$67, $ u(S_{3})$: \$94, $ u(S_{4})$: \$67, $ u(S_{5})$: \$76, $ u(S_{6})$: \$81$\}$. The algorithm also generates the utility-arrays for each transaction in \textit{QSD}. The projected database for each 1-sequence is built based on these utility-arrays. The $SWU$, \textit{PMIU}, $MIU$, and actual utility  of each 1-sequence are calculated from the projected database, respectively. The obtained results are shown in Table \ref{table:example1}.

\begin{table}[!htbp] 	
	\setlength{\abovecaptionskip}{0pt}
	\setlength{\belowcaptionskip}{0pt} 
	\caption{The obtained information of 1-sequences in $ QSD $.}
	\centering
	\begin{tabular}{|c|c|c|c|c|c|c|}
		\hline
		\textbf{1-sequence} & \textit{a} & \textit{b} & \textit{c} & \textit{d} & \textit{e} &  \textit{f} \\ \hline
		\textit{SWU}($ < $$ i_{j} $$ > $)  & \$360 & \$441 & \$441 &  \$360 & \$441 &  \$81 \\ \hline
		\textit{PMIU}($ < $$ i_{j} $$ > $) & \$200 & \$200 & \$200 &  \$200 & \$200 &  \$70 \\ \hline
		\textit{u}($ < $$ i_{j} $$ > $)    & \$48  & \$130 & \$72 & \$17 &  \$48    & \$24 \\ \hline
		\textit{MIU}($ < $$ i_{j} $$ > $)  & \$500 & \$500 & \$500 & \$500 &  \$200  & \$70 \\ \hline
	\end{tabular}
	\label{table:example1}
\end{table}

In this example, the 1-sequence $<$\textit{a}$>$ is first processed with the lexicographic order. Since $ SWU $($<$\textit{a}$>$) > \textit{PMIU}($<$\textit{a}$>$) but $ u $($<$\textit{a}$>$) $ < $ $ MIU $($<$\textit{a}$>$), $<$\textit{a}$>$ is not a HUSP, but the super-sequences of $<$\textit{a}$>$ should be explored. Thus, $<$\textit{a}$>$ becomes the prefix pattern for the later mining processes. The designed \textbf{Span-Search} function is then performed on $<$\textit{a}$>$ to generate its super-sequences. USPT scans the projected utility-array $(D.ua)|_{a}$ once to: 1) put $I$-Concatenation items of $t$ into \textit{iItem}; 2) put $S$-Concatenation items of $t$ into \textit{sItem}; and 3) calculate the \textit{PEU} values of these items form $(D.ua)|_{a}$. For 1-sequence $<$\textit{a}$>$, the candidates are updated as \textit{iItem} = \{$b, c, d, e$\} for $I$-\textit{Concatenation} of $<$\textit{a}$>$, and \textit{sItem} = \{$a, b, c, d, e$\} for $S$-\textit{Concatenation} of $<$\textit{a}$>$. In this example, the candidate item $ \{a\} $, appearing in $ S_{3} $ and $ S_{5} $, is different from the prefix $<$\textit{a}$>$, and it can be used for the \textit{S-Concatenation} of $<$\textit{a}$>$. The \textit{PEU} values of the candidate items are respectively calculated as \{\textit{PEU}($a$): \$312, \textit{PEU}($b$): \$359, \textit{PEU}($c$): \$218, \textit{PEU}($d$): \$156, \textit{PEU}($e$): \$104, \textit{PEU}($f$): \$81\}, and those unpromising items $d$, $e$, and $f$ are removed from the sets of \textit{iItem} and \textit{sItem} since their \textit{PEU} values are less than \textit{PMIU}($<$\textit{a}$>$) (= 200). Thus, only \textit{iItem} = \{$a, b, c$\} and \textit{sItem} = \{$a, b, c$\} are used to span the LS-tree with \textit{I-Concatenation} and \textit{S-Concatenation}, respectively.

After that, the USPT framework first generates new sequence $t'$ = $<$\textit{a, a}$>$ and constructs the projected utility-array $(D.ua)|_{t'}$. However, this $(D.ua)|_{t'}$ is empty since $<$\textit{a, a}$>$ does not occurred in Table \ref{table:db}. For the next new sequence $<$\textit{a, b}$>$, its projected utility-array is then constructed. This utility-array is not empty and the utility values of $<$\textit{a, b}$>$ for the \textit{I-Concatenation} of $<$\textit{a}$>$ are \{\textit{u}($<$\textit{a, b}$>$): \$55, \textit{PEU}($<$\textit{a, b}$>$): \$170, \textit{PMIU}($<$\textit{a, b}$>$): \$500, \textit{MIU}($<$\textit{a, b}$>$): \$500\}. Based on the \textit{PEU}-based PUK strategy, the candidate sequence $<$\textit{a, b}$>$ and its super-sequences could not be the HUSPs. Therefore, USPT stops calling the  \textbf{Span-Search} function and continues to  determinate the new candidate sequence $<$\textit{a, c}$>$.

After checking all candidates for \textit{I-Concatenation} with prefix $<$$a$$>$, the USPT algorithm begins the similar projection-based search for \textit{S-Concatenation} with prefix $<$$a$$>$. Three sequences $<$[$a$], [$a$]$>$, $<$[$a$], [$b$]$>$, and $<$[$a$], [$c$]$>$, are then performed to calculate their \textit{PEU}, \textit{PMIU}, utility, and $ MIU $ values by constructing the projected utility-array $(D.ua)|_{t'}$, based on $(D.ua)|_{a}$. After that, the \textbf{Span-Search} approach is continuously performed on next 1-sequence $<$$b$$>$ to discover HUSPs, using the similar operations on 1-sequence $<$$a$$>$. This procedure is then processed repeatedly until no candidates are generated. In this illustrated example, the final discovered HUSPs are shown in Table \ref{table:husps}.


\begin{table}[!htbp]
	\setlength{\abovecaptionskip}{0pt}
	\setlength{\belowcaptionskip}{0pt} 
	\caption{The discovered HUSPs.}
	\centering
	\begin{tabular}{|c|c|c|}
		\hline
		$ HUSP $ & $ utility $ & \textit{PMIU}  \\ \hline
		$<$[\textit{b}], [\textit{c, e}]$>$ & \$200 & \$200    \\ \hline
		$<$[\textit{f}], [\textit{b, c}], [\textit{b}]$>$ & \$73  & \$70   \\ \hline
		$<$[\textit{f}], [\textit{b, c}], [\textit{b, e}]$>$ & \$81  & \$70   \\ \hline
		$<$[\textit{f}], [\textit{b}], [\textit{b, e}]$>$ & \$72  & \$70   \\ \hline		
	\end{tabular}
	\label{table:husps}
\end{table}

From Table \ref{table:husps}, if the user focuses on the items (\textit{f}) and (\textit{e}), then the \textit{mu} values for (\textit{e}) and (\textit{f}) can be set lower. Even though the utilities of these sequences are not high, the designed algorithm can identify the sequences containing interesting items.

\section{Experimental Results}
\label{sec:6}

Several experiments were performed to evaluate the efficiency and effectiveness of the designed USPT framework. The compared algorithms include the baseline algorithm and its variations with several developed pruning strategies. For convenience, the baseline is called USPT1 (without Line 6 in Algorithm \ref{procedureOfUSPT} and Lines 6-7 in Algorithm \ref{Span-LQStree}), the variation without Lines 6-7 in Algorithm \ref{Span-LQStree} is called USPT2, and the variation with all three pruning strategies as shown in Algorithm \ref{Span-LQStree} is called USPT. This is the first work to handle the issue of discovering the HUSPs with individualized thresholds. Therefore, the baseline algorithm with its variations are compared to evaluate the mining efficiency, in terms of execution time, pattern analysis, memory consumption, and scalability test.

Besides, it is important to notice that the addressed USPT framework equals to high-utility sequential pattern mining (HUSPM) whereas all the individualized thresholds in \textit{M-table} equal to the unified minimum utility threshold. Therefore, a well-known HUS-Span \cite{wang2016efficiently} algorithm is compared with USPT to evaluate their effectiveness. A model introduced in \cite{2lin2016efficient} was used to set the varied thresholds for the items in datasets. The function of our model was set as: $mu(i_{j}) = max\{\beta\times u(i_{j}), LMU\}$. In this function, $ LMU $ represents the least minimum utility threshold, which can be adjusted by user's preference; $ u(i_{j}) $ shows the utility of each item $ i_{j} $ in the dataset; and $\beta $ states the constant factor, which is for adjusting the $ mu $ value of items.

\subsection{Experimental Environment and Datasets}

The all compared algorithms were implemented in Java language, and the experiments were conducted on a PC ThinkPad T470p. An Intel(R) Core(TM) i7-7700HQ CPU was used with 32 GB of main memory. The running operating system was set as 64-bit Microsoft Windows 10.

In the experiments, one synthetic dataset \cite{agrawal1994dataset} and five real-life datasets\footnote{\url{http://www.philippe-fournier-viger.com/spmf/index.php}} were used to evaluate the performance of the designed USPT algorithm. The characteristics of these datasets are described in Table \ref{table:paras} and Table \ref{tab:datasets}, and details are as follows:

\begin{table}[!htbp] 	
	\setlength{\abovecaptionskip}{1pt}
	\setlength{\belowcaptionskip}{1pt} 
	\caption{Parameters of used datasets.}
	\label{table:paras}
	\centering
	\begin{tabular}{|c|l|}
		\hline
		$\mathbf{\#|D|}$ &	Number of sequences \\ \hline
		$\mathbf{\#|I|}$ &	Number of distinct items     \\ \hline
		\textbf{\#S} &  Length of a sequence $s$   \\ \hline
		\textbf{\#Seq} &  Number of elements per sequence   \\ \hline
		\textbf{\#Ele} & Average number of items w.r.t. per element/itemset                \\ \hline 
	\end{tabular}
	
\end{table}

\begin{table*}[h]
	\setlength{\abovecaptionskip}{1pt}
	\setlength{\belowcaptionskip}{1pt} 
	\caption{Dataset features}
	\label{tab:datasets}
	
		\begin{tabular}{| l | c | c | c | c | c | c | l |}
			\hline
			\textbf{Dataset} & \textbf{\#$|D|$} & \textbf{\#$|I|$} & \textbf{avg(\#S)} & \textbf{max(\#S)} & \textbf{avg(\#Seq)} & \textbf{ave(\#Ele)}  \\
			\hline \hline
			\multirow{6}{*}{} 
			Sign & 730  & 267  & 52 & 94 &  51.99 & 1.0 \\

			Bible & 36,369 & 13,905 & 21.64 & 100 & 17.85  & 1.0 \\
			
			SynDataset-160k & 159,501 & 7,609 & 6.19 & 20 & 26.64 & 4.32 \\
					
			Leviathan & 5,834 & 9,025 & 33.81 & 100 & 26.34  & 1.0 \\
			
			MSNBC & 31,790 & 17 & 13.1 & 100 & 5.33 & 1.0 \\
					
			yoochoose-buys & 234,300 & 16,004 & 1.13 & 21 & 2.11 & 1.97 \\
			\hline			
	\end{tabular}

\end{table*}

\begin{itemize}
	\item \textbf{Sign}: a real-life sign language utterance dataset, which was created by the National Center for Sign Language and Gesture Resources at Boston University. It has 730 sequences with total 267 distinct items. Each utterance in this dataset is associated with a segment of video with a detailed transcription. 
	
	\item \textbf{Bible}: a real-life dataset of the conversion of the Bible into a set of sequences and each word is an item. 
	
	\item \textbf{C8S6T4I3D$ | $X$ | $K}: a synthetic sequential dataset, which was generated by IBM Quest Dataset Generator \cite{agrawal1994dataset}, and used to evaluate the scalability of USPT. Note that \textbf{SynDataset-160K} is a subset of \textbf{C8S6T4I3D$ | $X$ | $K}, and only contains 160,000 sequences.
		
	\item \textbf{Leviathan}: a conversion of the 1651 novel, Leviathan, by Thomas Hobbes. Each word is an item in the sequences.

	\item \textbf{MSNBC}: a real-life click-stream dataset, which was accessed from the UCI repository, and all the shortest sequences are removed.  It contains 31,790 sequences and 17 distinct items.
	
	\item \textbf{yoochoose-buys}: a real-life e-commercial dataset, which was collected from YOOCHOOSE GmbH\footnote{\url{https://recsys.acm.org/recsys15/challenge/}}. We preprocess this dataset by removing some noise data.

\end{itemize}

Since these sequence datasets  except for yoochoose-buys do not contain the utility values, a well-known simulation model \cite{liu2005two} was utilized to generate the internal and external utility values of the items in these  datasets. For the quantity value of each item, a random function was used to generate its value in a range from 1 to 5. A log-normal distribution was used to randomly assign the unit utility value of the items in a range from 1 to 1000. All the test datasets, except for \textbf{C8S6T4I3D$ | $X$ | $K}, can be accessed from the SPMF website\footnote{\url{http://www.philippe-fournier-viger.com/spmf/index.php}}.

\subsection{Runtime Analysis}
In our experiments, each runtime of the designed approach with different pruning strategies was first evaluated to analyze the efficiency of USPT algorithm. Results under various \textit{LMUs} with a fixed $\beta$ of runtime comparison are shown in Fig. \ref{fig:runtime1}. Moreover, results under various $\beta$ with a fixed \textit{LMU} are shown in Fig. \ref{fig:runtime2}.

\begin{figure}[!htbp]
	\setlength{\abovecaptionskip}{0pt}
	\setlength{\belowcaptionskip}{0pt}	
	\centering
	\includegraphics[trim=40 10 30 5,clip,scale=0.45]{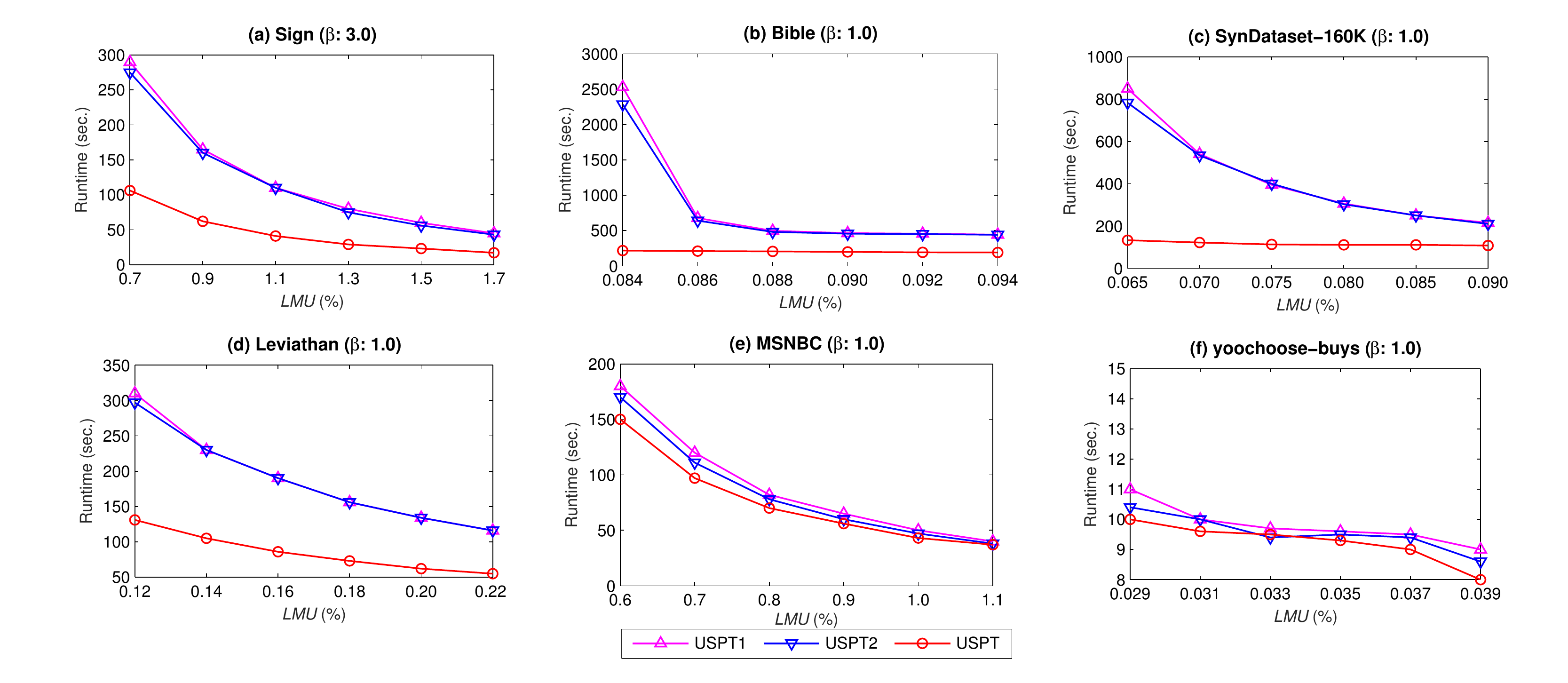}
	\caption{Runtime under various \textit{LMU} with a fixed $\beta$.}
	\label{fig:runtime1}
\end{figure}

\begin{figure}[!htbp]
	\setlength{\abovecaptionskip}{0pt}
	\setlength{\belowcaptionskip}{0pt}		
	\centering
	\includegraphics[trim=30 10 30 5,clip,scale=0.44]{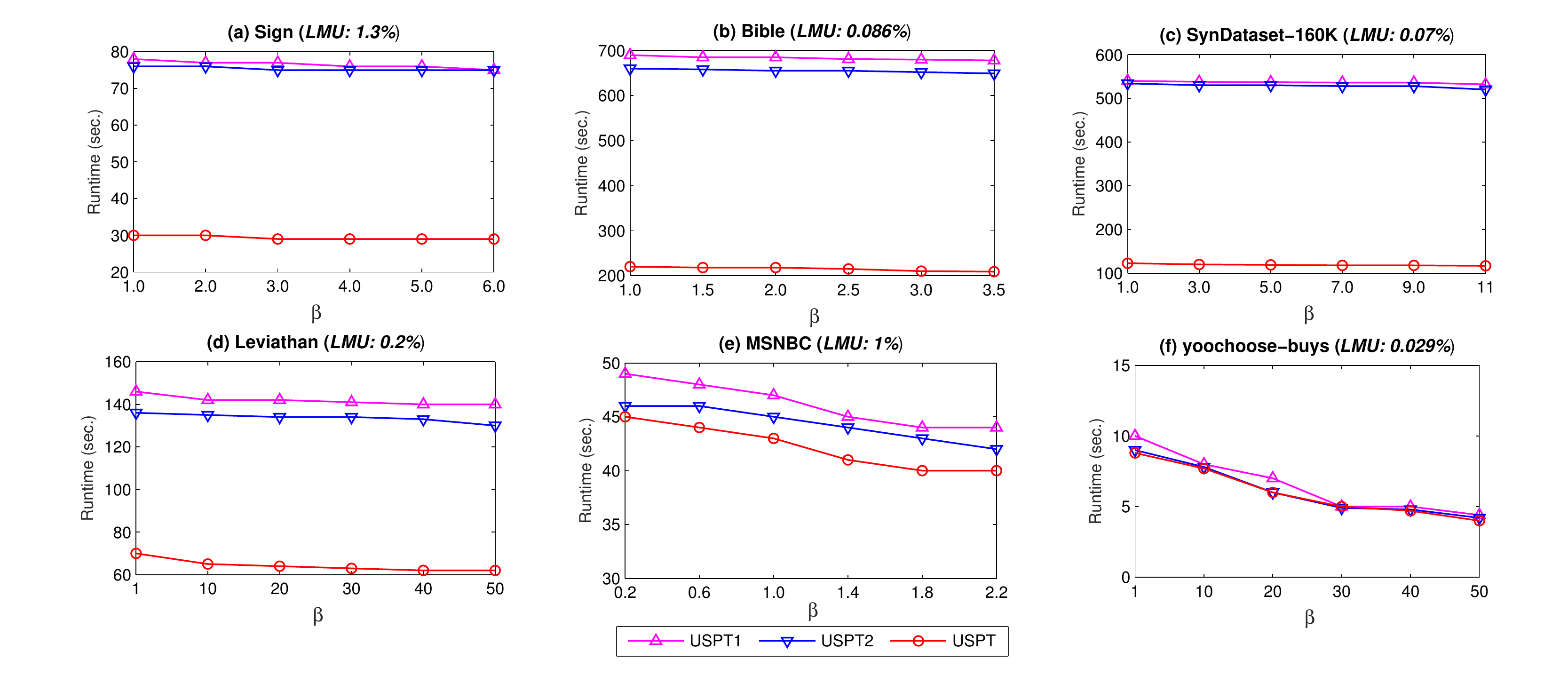}
	\caption{Runtime under various $\beta$ with a fixed \textit{LMU}.}
	\label{fig:runtime2}
\end{figure}

As shown in Figs. \ref{fig:runtime1} and \ref{fig:runtime2}, the hybrid  USPT which utilizes all pruning strategies outperformed the basic USPT1 and USPT2 on all the datasets. The runtime always shows that USPT $<$ USPT2 $<$ USPT1. Specifically, USPT was up to one or more orders of magnitude faster than the baseline algorithm without any improvements. For example, when varying \textit{LMU} from 0.065\% to 0.090\% on the SynDataset-160K dataset, as shown in Fig. \ref{fig:runtime1}(c), the runtime of USPT changed from 130 seconds to 110 seconds, while the consumed runtime of USPT1 changed from 850 seconds to 220 seconds. The same trend can also be observed in Figs. \ref{fig:runtime1}(a) to \ref{fig:runtime1}(f), and also in Figs. \ref{fig:runtime2}(a) to \ref{fig:runtime2}(f). The reason for this is that USPT utilizes a tighter upper-bound on sequence utility (e.g., \textit{PEU}) to prune the unpromising candidates earlier; it is more powerful than the sequence-weighted utilization (\textit{SWU}) that was used in the basic USPT1 algorithm. Therefore, the proposed three pruning strategies are shown to be effective and efficient to speed up mining performance.

As shown in Fig. \ref{fig:runtime1}, the runtime for all the implemented algorithms increased along with a decrease of the \textit{LMU} on all conducted datasets. When the \textit{LMU} was set lower, the runtime of USPT remained stable, but the runtime of the baseline USPT1 and USPT2 algorithms dramatically increased. This is reasonable since a large number of candidates are needed and generated with a loose upper-bounds utility, which can be seen in Figs. \ref{fig:runtime1}(a), \ref{fig:runtime1}(b), and \ref{fig:runtime1}(c).

Moreover, it is interesting to notice that the runtime of the compared approaches always decreased with the increase of $\beta$, which is shown in Figs. \ref{fig:runtime2}(a) to \ref{fig:runtime2}(f). This is because the constant factor $\beta$ increases the $ mu $ values of items. The least minimum utility thresholds of the desired sequential patterns would also be increased. Thus, less runtime was required to find fewer candidates for mining HUSPs. The developed UPST algorithm could discover the final results within a reasonable time, and the enhanced variants could achieve good performance in terms of runtime.

\subsection{Pattern Analysis}
With the same parameter settings as in Section 6.2, the size of the derived candidates and the final set of HUSPs were evaluated. The goal was to analyze the effectiveness of the proposed USPT algorithm and the effect of three pruning strategies. Note that \#\textit{Candidate} 1, \#\textit{Candidate} 2, and \#\textit{Candidate} 3 respectively showed the size of candidates generated by the USPT1, USPT2, and USP algorithms, and \# \textit{HUSPs} denoted the number of the final discovered HUSPs. The pattern results with a fixed $\beta$ and various \textit{LMUs} and with a fixed \textit{LMU} and various $\beta$ are illustrated in Tables \ref{table:candidate1} and \ref{table:candidate2}, respectively.

\begin{table}[htb]
	\fontsize{7.2pt}{9pt}\selectfont
	\centering
	\caption{Number of patterns under a fixed $\beta$ with various \textit{LMU}.}
	\label{table:candidate1}
	\begin{tabular}{ccllllll}
		\hline\hline
		\multirow{2}*{\textbf{}}&
		\multirow{2}*{\textbf{}}
		&\multicolumn{6}{c}{\textbf{Fixed $\beta$ with various \textit{LMU}}}\\
		\cline{3-8}
		&& \textit{LMU}$_1$ & \textit{LMU}$_2$ & \textit{LMU}$_3$ &  \textit{LMU}$_4$ &  \textit{LMU}$_5$  &  \textit{LMU}$_6$ \\ \hline

		&  \textbf{\#\textit{Candidate} 1} & 30,968,191 & 15,039,951	 & 8,494,306 & 	5,265,822 & 	3,490,865 & 	2,418,798  \\
		($a$)  Sign &  \textbf{\#\textit{Candidate} 2} & 30,931,811 & 	14,969,330 & 	8,433,278 & 	5,230,673 & 	3,453,759 & 	2,417,460 \\
		($\beta$: 3.0) &\textbf{\#\textit{Candidate} 3} & 4,431,028 & 	2,122,378 & 	1,179,664 & 	722,378	 & 475,204 & 	 330,575	\\
		&  \textbf{\#\textit{HUSPs}} & 84,529 & 	59,062 & 	34,479 & 	24,481 & 	17,352 & 	10,970	 \\
		\hline

		&  \textbf{\#\textit{Candidate} 1} & 1,128,248,314	 & 108,332,778 & 	14,062,212 & 	7,764,230 & 	6,997,687 & 	6,548,176 \\
		($b$)   Bible &  \textbf{\#\textit{Candidate} 2} & 999,021,192 & 	91,339,687 & 	12,677,343 & 	7,503,766 & 	6,806,316 & 	6,368,189	 \\
		($\beta$: 1.0) &\textbf{\#\textit{Candidate} 3} & 1,260,949 & 	1,042,111 & 	980,736 & 	927,481	 & 878,041 & 	832,596	\\
		&  \textbf{\#\textit{HUSPs}} & 6,001 & 	5,809	 & 5,593 & 	5,345 & 	5,108 & 	4,940 	 \\
		\hline

		&  \textbf{\#\textit{Candidate} 1} & 8,752,612 & 	5,357,804 & 	3,313,643 & 	2,122,959 & 	1,394,948 & 	948,352 \\
		($c$) Scalability-160K &  \textbf{\#\textit{Candidate} 2} & 8,736,281 & 	5,347,225 & 	3,307,685 & 	2,117,095 & 	1,389,002 & 	944,719 \\
		($\beta$: 1.0) &\textbf{\#\textit{Candidate} 3} & 252,361	 & 109,026	 & 50,308 & 	29,337 & 	21,711 & 	17,516	\\
		&  \textbf{\#\textit{HUSPs}} & 58,519 & 	17,741 & 	4,659 & 	1,220 & 	289	 & 90	 \\
		\hline

		&  \textbf{\#\textit{Candidate} 1} & 16,653,274 & 	10,697,194 & 	7,626,524 & 	5,643,813 & 	4,338,173 & 	3,421,703 \\
		($d$)  Leviathan &  \textbf{\#\textit{Candidate} 2} & 16,042,628 & 	10,266,172 & 	7,295,498 & 	5,413,709 & 	4,149,152 & 	3,274,113 \\
		($\beta$: 1.0)&\textbf{\#\textit{Candidate} 3} & 3,000,848 & 	2,025,511 & 	1,447,670 & 	1,078,496 & 	829,904	 & 654,957	\\
		&  \textbf{\#\textit{HUSPs}} & 384,681 & 	238,670	 & 157,953 & 	109,295	 & 80,857 & 	61,440 	 \\
		\hline

		&  \textbf{\#\textit{Candidate} 1} & 2,321,973 & 	1,131,628 & 	625,801 & 	385,761 & 	247,973	 & 167,433 \\
		($e$) MSNBC &  \textbf{\#\textit{Candidate} 2} & 2,321,973 & 	1,131,628 & 	625,801	 & 385,761 & 	247,973 & 	167,433	 \\
		($\beta$: 1.0)&\textbf{\#\textit{Candidate} 3} & 951,843 & 	471,188	 & 269,812 & 	171,706 & 	110,153 & 	75,437	\\
		&  \textbf{\#\textit{HUSPs}} & 136 & 	136	 & 136	 & 136	 & 122	 & 113	 \\
		\hline

		&  \textbf{\#\textit{Candidate} 1} & 268,788 & 	226,033	 & 179,922 & 	134,105	 & 88,400 & 	42,266 \\
		($f$)  yoochoose-buys &  \textbf{\#\textit{Candidate} 2} & 268,776 & 	226,018 & 	179,905 & 	134,093	 & 88,395 & 	42,255	 \\
		($\beta$: 1.0) &\textbf{\#\textit{Candidate} 3} & 268,408	 & 225,692 & 	179,611	 & 133,832 & 	88,152 & 	42,048	\\
		&  \textbf{\#\textit{HUSPs}} & 258,886 & 	214,309 & 	168,215	 & 122,706 & 	76,789 & 	31,142	 \\
		
		\hline\hline
	\end{tabular}
\end{table}

\begin{table}[htb]
	\fontsize{7.2pt}{9pt}\selectfont
	\centering
	\caption{Number of patterns under a fixed \textit{LMU} with various $\beta$.}
	\label{table:candidate2}
	\begin{tabular}{ccllllll}
		\hline\hline
		\multirow{2}*{\textbf{}}&
		\multirow{2}*{\textbf{}}
		&\multicolumn{5}{c}{\textbf{Fixed \textit{LMU} with various $\beta$}}\\
		\cline{3-8}
		&&$ \beta_1 $ & $ \beta_2 $ & $ \beta_3 $ & $ \beta_4 $ &  $ \beta_5 $  &  $ \beta_6 $ \\ \hline

		&  \textbf{\#\textit{Candidate} 1} & 5,265,822 & 	5,265,822 & 	5,265,822 & 	5,265,822 & 	5,265,822 & 	5,265,795 \\
		($a$)  Sign &  \textbf{\#\textit{Candidate} 2} & 5,230,673 & 	5,230,673 & 	5,230,673 & 	5,230,673 & 	5,230,673 & 	5,230,646	 \\
		(\textit{LMU}: 1.3\%) &\textbf{\#\textit{Candidate} 3} & 722,378 & 	722,378	 & 722,378 & 	722,378	 & 722,378	 & 722,376	\\
		&  \textbf{\#\textit{HUSPs}} & 55,883 & 	45,733 & 	24,481 & 	9,561 & 	4,733 & 	1,517	 \\
		\hline

		&  \textbf{\#\textit{Candidate} 1} & 108,332,778 & 	108,332,778 & 	108,332,778 & 	108,329,765	 & 108,329,702	 & 108,329,702 \\
		($b$)   Bible &  \textbf{\#\textit{Candidate} 2} & 91,339,687	 & 91,339,687 & 	91,339,687 & 	91,336,966 & 	91,336,906 & 	91,336,906	 \\
		(\textit{LMU}: 0.086\%) &\textbf{\#\textit{Candidate} 3} & 1,042,111 & 	1,042,111 & 	1,042,111 & 	1,042,111 & 	1,042,111 & 	1,042,111	\\
		&  \textbf{\#\textit{HUSPs}} & 5,809 & 	1,413 & 	503	 & 283 & 	167 & 	88	 \\
		\hline

		&  \textbf{\#\textit{Candidate} 1} & 5,357,804	 & 5,357,804 & 	5,357,804 & 	5,357,804 & 	5,357,804 & 	5,357,804 \\
		($c$)    Scalability-160K &  \textbf{\#\textit{Candidate} 2} & 5,347,225	 & 5,347,225 & 	5,347,225 & 	5,347,225 & 	5,347,225 & 	5,347,225	 \\
		(\textit{LMU}: 0.07\%) &\textbf{\#\textit{Candidate} 3} & 109,026	 & 109,026	 & 109,026 & 	109,026 & 	109,026 & 	109,026 	\\
		&  \textbf{\#\textit{HUSPs}} & 17,741 & 	17,513 & 	16,737 & 	8,982 & 	352 & 	261	 \\
		\hline

		&  \textbf{\#\textit{Candidate} 1} & 4,338,173	 & 4,338,112	 & 4,337,591 & 	4,332,859 & 	4,290,103 & 	4,093,436 \\
		($d$)   Leviathan &  \textbf{\#\textit{Candidate} 2} & 4,149,152 & 	4,149,096 & 	4,148,651 & 	4,144,219 & 	4,104,213 & 	3,919,454	 \\
		(\textit{LMU}: 0.2\%) &\textbf{\#\textit{Candidate} 3} & 829,904 & 	829,873 & 	829,749 & 	829,269	 & 826,212 & 	811,885	\\
		&  \textbf{\#\textit{HUSPs}} & 80,857 & 	78,006 & 	71,399 & 	19,048 & 	8,329 & 	2,052	 \\
		\hline

		&  \textbf{\#\textit{Candidate} 1} & 248,017 & 	248,017 & 	247,973 & 	234,225 & 	231,471 & 	230,921  \\
		($e$)   MSNBC &  \textbf{\#\textit{Candidate} 2} & 248,017 & 	248,017	 & 247,973 & 	234,225 & 	231,471 & 	230,921	 \\
		(\textit{LMU}: 1\%)  &\textbf{\#\textit{Candidate} 3} & 110,156	 & 110,156 & 	110,153 & 	102,942 & 	101,690 & 	101,582	\\
		&  \textbf{\#\textit{HUSPs}} & 7,539 & 	1,005	 & 122 & 	32 & 	7 & 	2	 \\
		\hline

		&  \textbf{\#\textit{Candidate} 1} & 268,788 & 	266,582	 & 266,011 & 	253,744 & 	102,749 & 	404\\
		($f$)   yoochoose-buys &  \textbf{\#\textit{Candidate} 2} & 268,776 & 	266,573 & 	266,005	 & 253,743 & 	102,747 & 	402	 \\
		(\textit{LMU}: 0.029\%) &\textbf{\#\textit{Candidate} 3} & 268,408 & 	266,372 & 	265,919	 & 253,680 & 	102,709 & 	364	\\
		&  \textbf{\#\textit{HUSPs}} & 258,886 & 	255,471 & 	255,471	 & 230,207 & 	87,290 & 	0 \\

		\hline\hline
	\end{tabular}
\end{table}

As shown in Tables \ref{table:candidate1} and \ref{table:candidate2}, the size of the candidates generated by USPT is much fewer that of USPT1 and USPT2. For instance, in the Leviathan dataset, shown in Table \ref{table:candidate1}(d), when $\beta$ was set as 1.0 and \textit{LMU} was set as 0.20\%, the results respectively were \#\textit{Candidate} 1: 3,421,703, \#\textit{Candidate} 2: 3,274,113, and \#\textit{Candidate} 3: 654,957. Among these, only \#\textit{Candidate} 3 was close to the final results of the HUSPs since \#\textit{HUSPs}: 61,440. Therefore, the developed strategies could be used to greatly reduce the size of candidates in the search space for mining the actual HUSPs. The runtime for mining the HUSPs could also be improved. More specifically, the results of the size of candidates imply that the more powerful filtering ability of the enhanced USPT algorithm pruned the unpromising patterns. The runtime in Figs. \ref{fig:runtime1} and \ref{fig:runtime2} show that the developed strategies could improve mining performance because the size of the candidates was reduced in the search space. To obtain the tight upper-bound values of the candidates, USPT1 and USPT2 need to recalculate the remaining utilities in the utility-array structure, which requires the additional computations.

From the above results, it can be observed that with the decrease of the \textit{LMU} and $ \beta $, the number of candidates increases for the USPT, USPT1, and USPT2. However, with the effect of the pruning strategies, USPT and USPT2 have better performances than the baseline USPT1. Therefore, USPT produced much fewer candidates than USPT1 when a lower \textit{LMU} or $ \beta $ was set. It can also be observed that the size of the candidates sharply decreases in Tables \ref{table:candidate1}(b) and \ref{table:candidate1}(e), which is very interesting and reassuring.

\subsection{Memory Consumption}
The maximal memory consumption of the compared approaches was  further evaluated in this subsection. Note that the maximal memory consumption was checked using Java API. Results under various \textit{LMUs} with a fixed $\beta$ of memory consumption are shown in Fig. \ref{fig:memory1}. Moreover, results of memory consumption under various $\beta$ with a fixed \textit{LMU} are shown in Fig. \ref{fig:memory2}.

\begin{figure}[!htbp]
	\setlength{\abovecaptionskip}{0pt}
	\setlength{\belowcaptionskip}{0pt}	
	\centering
	\includegraphics[trim=30 10 30 5,clip,scale=0.45]{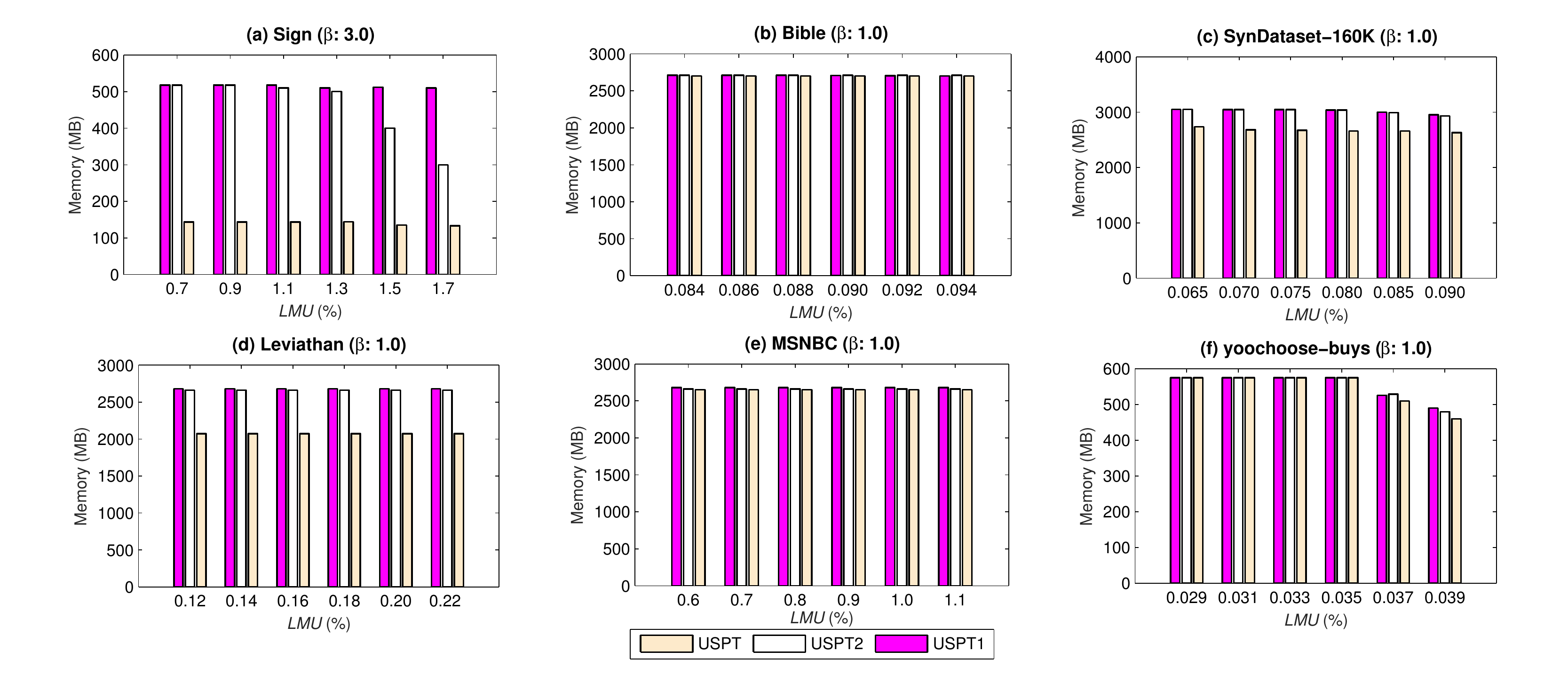}
	\caption{Memory consumption under various \textit{LMU} with a fixed $\beta$.}
	\label{fig:memory1}
\end{figure}

\begin{figure}[!htbp]
	\setlength{\abovecaptionskip}{0pt}
	\setlength{\belowcaptionskip}{0pt}		
	\centering
	\includegraphics[trim=30 10 35 5,clip,scale=0.45]{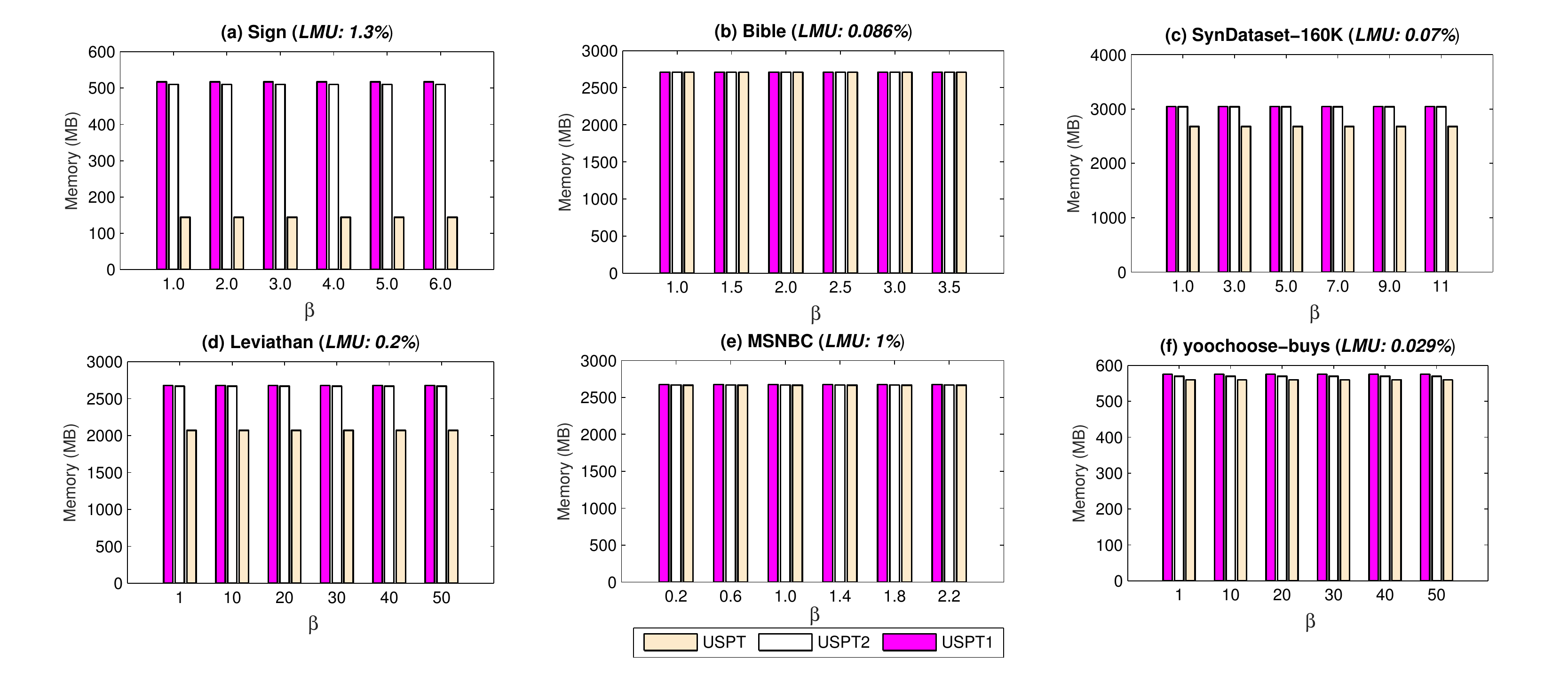}
	\caption{Memory consumption under various $\beta$ with a fixed \textit{LMU}.}
	\label{fig:memory2}
\end{figure}

Obviously, three variants of the proposed USPT algorithm consume the stable memory to discover HUSPs with individualized minimum utility thresholds. Among them, USPT1 and USPT2 always consume the similar maximal memory in different parameter settings on all datasets. And the hybrid  USPT variant always consumes the least memory on all datasets, and sometimes the gap of memory consumption between USPT and other baselines is large, as shown in Sign (Fig. \ref{fig:memory1}(a) and Fig. \ref{fig:memory2}(a)) and Leviathan (Fig. \ref{fig:memory1}(d) and Fig. \ref{fig:memory2}(d)). According the above experimental results of runtime, the number of generated candidates and memory consumption, we can summarize the conclusion as follows: the designed USPT framework can successfully discover the set of high-utility sequential patterns with individualized thresholds, and the pruning strategies can further be utilized to speedup mining efficiency, in terms of execution time, powerful of pruning unpromising candidates, and the memory consumption.

\subsection{USPT Framework vs. Basic HUSPM Model}

As mentioned before, a traditional HUSPM algorithm could be a special case of the USPT problem statement whereas the minimum individualized thresholds equal to the unified threshold in HUSPM. In other words, when $\beta$ in  the function $mu(i_{j}) = max\{\beta\times u(i_{j}), LMU\}$ is set to 0, then the \textit{mu} value of each item/sequence is equal to \textit{LMU}. Thus, the basic HUSPM model has become a special case of the USPT problem statement when $\beta$ = 0. In this subsection, a performance comparison of USPT and a traditional HUSPM algorithm was provided. We selected HUS-Span \cite{wang2016efficiently} as the compared HUSPM algorithm. With the same parameter settings as in Section 6.2, detailed results are presented in Table \ref{table:withHUSPM}(a) to \ref{table:withHUSPM}(f), respectively. Note that $\beta$ is set to 0 in USPT, and the \textit{LMU} equals the unified minimum utility threshold in HUS-Span.

\begin{table}[htb]
	\fontsize{7.2pt}{8pt}\selectfont
	\centering
	\caption{Number of patterns under a fixed $\beta$ with various \textit{LMU}.}
	\label{table:withHUSPM}
	\begin{tabular}{ccllllll}
		\hline\hline
		\multirow{2}*{\textbf{}}&
		\multirow{2}*{\textbf{}}
		&\multicolumn{6}{c}{\textbf{Fixed $\beta$ with various \textit{LMU}}}\\
		\cline{3-8}
		&& \textit{LMU}$_1$ & \textit{LMU}$_2$ & \textit{LMU}$_3$ &  \textit{LMU}$_4$ &  \textit{LMU}$_5$  &  \textit{LMU}$_6$ \\ \hline

		&  \textbf{\textit{Time*}} & 643	 & 416	 & 278 & 	206 & 	157	 & 119  \\

		&  \textbf{\textit{Memory*}} & 2,063	 & 2,067 & 	2,067 & 	2,064 & 	2,062 & 	2,064 \\

 &\textbf{\#\textit{Candidate*}} & 30,971,048 & 	15,039,955 & 	8,494,312 & 	5,265,825 & 	3,490,869 & 	2,418,799	\\
		($a$) Sign 			&  \textbf{\#\textit{HUSPs*}} & 621,291 & 	245,689	 & 112,001 & 	56,395 & 	30,440 & 	17,274	 \\
		\cline{2-8}
	($\beta$: 3.0)			&  \textbf{\textit{Time}} & 115 & 	69 & 	46 & 	33 & 	23 & 	19 \\
	  &  \textbf{\textit{Memory}} & 144 & 	144	 & 144 & 	133 & 	133 & 	133 \\
		 &\textbf{\#\textit{Candidate}} & 4,431,158 & 	2,122,378 & 	1,179,664 & 	722,378 & 	475,204	 & 330,575	\\		
		&  \textbf{\#\textit{HUSPs}} & 621,291 & 	245,689	 & 112,001 & 	56,395 & 	30,440 & 	17,274	 \\
		\hline

		&  \textbf{\textit{Time*}} & - & 	1,680 & 	1,172 & 	1,088 & 	1,048 & 	1,030  \\

&  \textbf{\textit{Memory*}} & - &	3,064 &	2,966 &	3,055 &	2,987 &	2,900  \\
 &\textbf{\#\textit{Candidate*}} & - &	108,260,940	 &14,012,881 &	7,659,519 &	6,916,037 &	6,471,482	\\
($b$)  Bible  &  \textbf{\#\textit{HUSPs*}} & 324,599 &	305,126 &	287,227 &	270,850 &	255,662 &	241,882	 \\
\cline{2-8}

($\beta$: 1.0) &  \textbf{\textit{Time}} & 213 & 	210	 & 198 &	192 &	189 &	208  \\
&  \textbf{\textit{Memory}} & 2,708 &	2,708 &	2,707 &	2,707 &	2,705 &	2,705  \\

&\textbf{\#\textit{Candidate}} & 1,260,949 &	1,042,111 &	980,736 &	927,481 &	878,041 &	832,596	\\		
&  \textbf{\#\textit{HUSPs}} & 324,599 &	305,126 &	287,227 &	270,850 &	255,662 &	241,882	 \\
\hline

&  \textbf{\textit{Time*}} & 2,824 &	2,022 &	1,460 &	1,152 &	950 &	820  \\

&  \textbf{\textit{Memory*}} & 2,021 &	1,830 &	1,734 &	1,716 &	1,700 &	1,512 \\
&\textbf{\#\textit{Candidate*}} & 8,752,654 &	5,357,847 &	3,313,681 &	2,123,007 &	1,394,913 &	948,397	\\
($c$)  SynDataset-160K &  \textbf{\#\textit{HUSPs*}} & 58,710 &	17,903 &	4,794 &	1,344 &	394	172	 \\
\cline{2-8}

($\beta$: 1.0)  &  \textbf{\textit{Time}} & 130 &	119 &	114 &	112 &	109 &	108 \\
 &  \textbf{\textit{Memory}} & 2,734 &	2,679 &	2,667 &	2,665 &	2,665 &	2,663 \\

&\textbf{\#\textit{Candidate}} & 252,361 &	109,026 &	50,308 &	29,337 &	21,711 &	17,516 	\\		
&  \textbf{\#\textit{HUSPs}} & 58,710 &	17,903 &	4,794 &	1,344 &	394	172	 \\
\hline

&  \textbf{\textit{Time*}} & 647 &	474	 &435 &	351 & 289 &	263  \\

&  \textbf{\textit{Memory*}} & 2,720	 &2,671	 & 2,671 &	2,670 &	2,669 &	2,114  \\

&\textbf{\#\textit{Candidate*}} & 16,498,302 &	10,628,212 &	7,571,368 &	5,601,623 &	4,307,997 &	3,401,042 	\\
($d$)  Leviathan  &  \textbf{\#\textit{HUSPs*}} & 1,034,006 &	672,862 &	463,814 &	333,084 &	249,942	 & 192,424	 \\
\cline{2-8}

($\beta$: 1.0) &  \textbf{\textit{Time}} & 132 &	104 &	85 &	72 &	61 &	53 \\
&  \textbf{\textit{Memory}} & 2,677 &	2,677 &	2,071 &	2,070 &	2,071 &	2,072  \\
&\textbf{\#\textit{Candidate}} & 3,000,848 &	2,025,511 &	1,447,670 &	1,078,496 &	829,904 &	654,957 	\\		
&  \textbf{\#\textit{HUSPs}} & 1,034,006 &	672,862 &	463,814 &	333,084 &	249,942	 & 192,424	 \\
\hline

&  \textbf{\textit{Time*}} & 1,015 &	498 &	294 &	201 &	144 &	111  \\

&  \textbf{\textit{Memory*}} & 2,722 &	2,730 &	2,760 &	2,755 &	2,748 &	2,763 \\

&\textbf{\#\textit{Candidate*}} & 3,041,187 &	1,347,748 &	686,318 &	392,296 &	247,873 &	167,378	\\
($e$)  MSNBC  &  \textbf{\#\textit{HUSPs*}} & 113,728 &	51,007 &	30,313 &	20,058 &	14,275 &	10,742	 \\
\cline{2-8}

($\beta$: 1.0)  &  \textbf{\textit{Time}} & 210 &	124 &	77 &	62 &	43 &	37 \\
&  \textbf{\textit{Memory}} & 2,674 &	2,673 &	2,674 &	2,676 &	2,677 &	2,121  \\
&\textbf{\#\textit{Candidate}} & 1,364,376 &	590,177 &	297,371 &	175,122 &	110,156 &	75,439	\\		
&  \textbf{\#\textit{HUSPs}} & 113,728 &	51,007 &	30,313 &	20,058 &	14,275 &	10,742	 \\
\hline

&  \textbf{\textit{Time*}} & 2.8 &	2.7	 &2.7 &	2.6 &	2.6 &	2.4  \\

&  \textbf{\textit{Memory*}} & 1,165 &	1,083 &	1,083 &	892 &	612 &	570 \\

 &\textbf{\#\textit{Candidate*}} & 269,036 &	226,274 &	180,138 &	134,329 &	88,623 &	42,476	\\
($f$)  yoochoose-buys &  \textbf{\#\textit{HUSPs*}} & 259,765 &	215,098 &	168,945 & 123,368 &	77,404 &	31,720	 \\
\cline{2-8}

($\beta$: 1.0) &  \textbf{\textit{Time}} & 10 &	10 &	9.7 &	10 &	9.5 &	9 \\
 &  \textbf{\textit{Memory}} & 575 &	575 &	575 &	575 &	525 &	484 \\
&\textbf{\#\textit{Candidate}} & 268,498 &	225,760 &	179,670 &	133,885 &	88,202 &	42,094	\\		
&  \textbf{\#\textit{HUSPs}} & 259,765 &	215,098 &	168,945 & 123,368 &	77,404 &	31,720	 \\
\hline

		\hline\hline
	\end{tabular}
\end{table}

Note that \textit{Time*}, \textit{Memory*}, \#\textit{Candidate*}, and \#\textit{HUSPs*} are the results of HUS-Span, and \textit{Time}, \textit{Memory}, \#\textit{Candidate}, and \#\textit{HUSPs} are related to USPT. Note that the scale of the runtime and memory are second and MB, respectively. From Table \ref{table:withHUSPM}(a) to Table \ref{table:withHUSPM}(f), their final set of discovered HUSPs are the same, while USPT always significantly outperforms than HUS-Span under various \textit{LMU}. For example, USPT consumes less execution time and memory than that of  HUS-Span except for yoochoose-buys. In particular, the number of candidates generated by USPT is more close to the final set of HUSPs. For instance, in Sign, when \textit{LMU} is set to 1.3\%, the results are: \#\textit{Candidate*}: 5,265,825, \#\textit{Candidate}: 722,378, and \#\textit{HUSPs}: 56,395. Obviously,   \#\textit{Candidate} is quite less than  \#\textit{Candidate*}, and close to \#\textit{HUSPs}. Notice that there are no results in Tables \ref{table:withHUSPM}(b) for the HUS-Span algorithm when \textit{LMU}: 0.084\%. This is because HUS-Span was terminated when the mining task required more than 10,000 seconds.  We can conclude that the proposed USPT approach is a general utility mining framework and has an acceptable effectiveness, since it can not only address the HUSPM problem with individualized thresholds, but also deal with the basic HUSPM problem with the unified threshold. In addition to the effectiveness, USPT has better efficiency than the state-of-the-art HUSPM algorithm.

\subsection{Scalability Test}
The scalability of the compared approaches was evaluated on a synthetic dataset named C8S6T4I3D$ | $X$ | $K. The results in terms of runtime and the number of candidates and HUSPs under the fixed \textit{LMU} and $\beta$ are shown in Fig. \ref{fig:scalability}. The dataset size of C8S6T4I3D$ | $X$ | $K is varied from 40K to 400K.

\begin{figure}[!htbp]
	\setlength{\abovecaptionskip}{0pt}
	\setlength{\belowcaptionskip}{0pt}	
	\centering
	\includegraphics[trim=25 200 30 0,clip,scale=0.62]{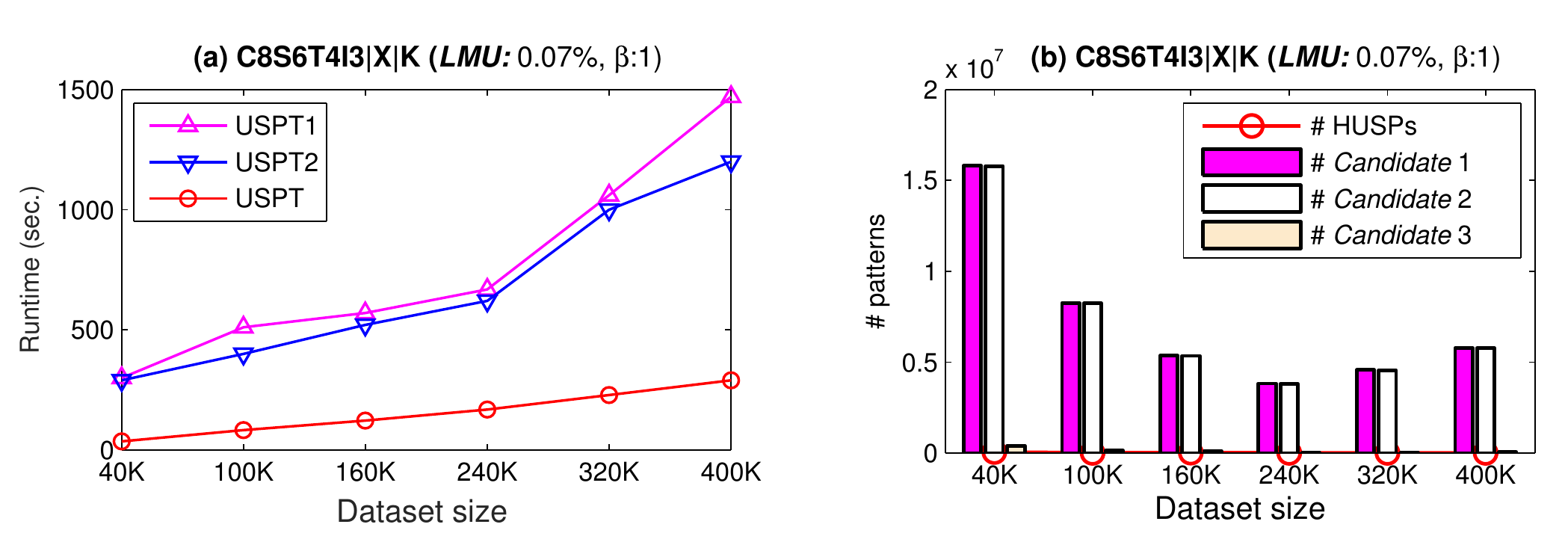}
	\caption{Scalability of the compared approaches.}
	\label{fig:scalability}
\end{figure}

As shown in Fig. \ref{fig:scalability}, the USPT algorithm has obtained better results in terms of scalability than that of the USPT1 and USPT2 algorithms. With the increasing dataset size, the runtime and the number of candidates of USPT1, USPT2, and USPT increase as well. However, the runtime gap between USPT and USPT1 has become larger along with the increased dataset size. This is reasonable since when the size of dataset has become larger or more denser, more candidates in transactions are required to be determined for mining the HUSPs. This process leads to an increasing in the requirements of runtime consumption. Without the designed strategies to efficiently prune the redundant candidates in the search space, the baseline algorithm sometimes requires a long time to perform the mining process. As shown in Fig. \ref{fig:scalability}(b), the size of the candidates remains stable along with the size of the dataset. The reason is that when the minimum utility value increases along with the dataset size, fewer candidates satisfy the condition for being HUSPs. If the data distribution in datasets was uniform, the number of candidates would increase or decrease steadily. However, the results in Fig. \ref{fig:scalability}(b) shows that the datasets had no uniformity in data distribution. A computation to examine the candidates was still required, and the runtime increases along with the increasing size of the dataset.

\section{Conclusion and Future Works}
\label{sec:7}
To solve utility-oriented sequential pattern mining, we presented a new USPT framework for efficiently discovering the set of HUSPs across multi-sequences by regarding individualized minimum utility thresholds, which are more informative and useful for decision makers, managers, and experts. To the best of our knowledge, this is the first study for solving the USPT problem. By utilizing the developed properties of upper-bounds, LS-tree and utility-array structures, the USPT framework was first proposed to extract the complete set of HUSPs. By utilizing the properties of various upper-bounds, three strategies were designed to prune the search space and improve the mining performance. To show the effectiveness and efficiency of the designed USPT algorithm, several experiments were conducted.

Since this is the first study on utility mining across multi-sequences with individualized thresholds, many extensions could be explored and studied in the future, such as designing more efficient algorithms, applying the proposed USPT algorithm in big data \cite{4gan2017data}, or extending the USPT model to other interesting applications (e.g., discovering rare pattern \cite{koh2016unsupervised}, parallel mining  \cite{3gan2018survey}, privacy preserving utility mining \cite{gan2018privacy}, and recommender system \cite{yin2014lcars}). Moreover, mining HUSPs with individualized thresholds over uncertain data \cite{lin2016efficient} would be a non-trivial and challenging task.

\section{Acknowledgment}

We sincerely thank the editors and anonymous reviewers whose valuable comments helped us to improve this paper significantly. This research was partially supported by the Shenzhen Technical Project under KQJSCX 20170726103424709 and JCYJ 20170307151733005, and NSF under grants III-1526499, III-1763325, III-1909323, SaTC-1930941, and CNS-1626432. Specifically, Wensheng Gan was supported by the China Scholarship Council Program, during his study at University of Illinois at Chicago, IL, USA.


\bibliographystyle{ACM-Reference-Format}
\bibliography{main}

\end{document}